\title{Partition Constraints for Conjunctive Queries: Bounds and Worst-Case Optimal Joins [TECHNICAL REPORT]}
\author{Kyle {Deeds}}{University of Washington, Seattle, WA, USA}{kdeeds@cs.washington.edu}{}{Was supported by the NSF SHF 2312195 and NSF IIS 2314527 grants}
\author{Timo Camillo {Merkl}}{TU Wien, Vienna, Austria}{timo.merkl@tuwien.ac.at}{}{Was supported by the Vienna Science and Technology Fund 
(WWTF) [10.47379/ICT2201, 10.47379/VRG18013].}
\authorrunning{K. Deeds and T.\,C. Merkl} 
\keywords{Worst-Case Optimal Joins,
Cardinality Bounds,
Degeneracy,
Degree Constraints,
Partition Constraints}
\DeclareMathOperator*{\argmin}{argmin}
\def\moverlay{\mathpalette\mov@rlay}
\def\mov@rlay#1#2{\leavevmode\vtop{%
   \baselineskip\z@skip \lineskiplimit-\maxdimen
   \ialign{\hfil$\m@th#1##$\hfil\cr#2\crcr}}}
\newcommand{\charfusion}[3][\mathord]{
    #1{\ifx#1\mathop\vphantom{#2}\fi
        \mathpalette\mov@rlay{#2\cr#3}
      }
    \ifx#1\mathop\expandafter\displaylimits\fi}
\newcommand{\cupdot}{\mathbin{{\cup}}}
\newcommand{\bigcupdot}{\mathop{{{\bigcup}}}}
\newcommand{\hexq}{Q_{\varhexagon}}
\newcommand{\GDC}{PC\xspace}
\newcommand{\GDCs}{PCs\xspace}
\begin{document}

\maketitle

\begin{abstract}
In the last decade, various works have used statistics on relations to improve both the theory and practice of conjunctive query execution. Starting with the AGM bound which took advantage of relation sizes, later works incorporated statistics like functional dependencies and degree constraints. Each new statistic prompted work along two lines; bounding the size of conjunctive query outputs and worst-case optimal join algorithms. In this work, we continue in this vein by introducing a new statistic called a \emph{partition constraint}.
This statistic captures latent structure within relations by partitioning them into sub-relations which each have much tighter degree constraints. We show that this approach can both refine existing cardinality bounds and improve existing worst-case optimal join algorithms.
\end{abstract}

\section{Introduction}

Efficient query execution is the cornerstone of modern database systems, where the speed at which information is retrieved often determines the effectiveness and user satisfaction of applications. In theoretical work, this problem is often restricted to the enumeration of conjunctive queries (CQs). One of the primary goals of database theory has been to classify the hardness of different queries and provide algorithms that enumerate them in optimal time. Most of these classical guarantees were described relative to the size, $N$, of the largest table in the database. In practical work, there has been a parallel effort to hone query optimizers which automatically generate an algorithm for query evaluation based on the particular properties of the data, finely tailoring the execution to the dataset at hand.

Recently, these lines of work have begun to meaningfully converge in the form of worst-case optimal join (WCOJ) algorithms. 
As a new paradigm in database theory, these methods provide data-dependent guarantees on query execution that take into account statistics $s(D)$ about the dataset being queried $D$. 
Concretely, WCOJ algorithms aim to only require time proportionate to (a bound on) the maximum join size over all databases $D'$ with the same statistics as $D$, i.e., an optimal runtime on the \textit{worst-case instance}. 
Crucially, the statistics that are used directly impact the kinds of guarantees that can be provided; more granular statistics allow for tighter worst-case analyses. 

In a sequence of papers, increasingly more detailed statistics were incorporated into theoretical analyses. The foundational work in this direction by Grohe et al. used the size of each table in the database, often called \textit{cardinality constraints} (CC)s, to produce bounds on the output size \cite{atserias2013size, DBLP:conf/soda/GroheM06}. This eventually led to a multitude of exciting WCOJ algorithms for CCs, e.g., Generic Join \cite{DBLP:journals/sigmod/NgoRR13}, Leapfrog Triejoin \cite{DBLP:conf/icdt/Veldhuizen14}, and NPRR \cite{DBLP:conf/pods/NgoPRR12}.
Instead of processing one join at a time, these algorithms processed one attribute at a time, asymptotically improving on traditional join plans. They have even seen significant uptake in the systems community with a variety of efficient implementations \cite{DBLP:journals/tods/AbergerLTNOR17,DBLP:journals/pvldb/FreitagBSKN20,DBLP:journals/pacmmod/WangWS23}.

In later work, researchers incorporated functional dependencies (FDs) into cardinality bounds and then generalized CCs and FDs to \textit{degree constraints} (DCs) \cite{DBLP:journals/jacm/GottlobLVV12, DBLP:conf/pods/KhamisNS16}. This work culminated in the development of the PANDA algorithm which leveraged information theory and proof-theoretic techniques to provide worst-case optimality relative to the polymatroid bound, modulo an additional poly-logarithmic factor \cite{DBLP:conf/pods/KhamisNS17}. 


A DC for a relation $R$ on the attributes $\textbf{Y}$ and a subset $\textbf{X}\subseteq \textbf{Y}$ asserts that for each fixed instantiation $\textbf{x}$ of $\textbf{X}$ there are only a limited number of completions to the whole set of attributes $\textbf{Y}$. Thus, $\textbf{X}$ functions like a weak version of a key.
However, DCs are a brittle statistic; a single high frequency value per attribute can dramatically loosen a relation's DCs even if all other values only occur once. In the graph setting, where this corresponds to bounded (in- or out-)degree graphs, this has long been viewed as an overly restrictive condition because graphs often have highly skewed degree distributions. To overcome this, the graph theory community identified degeneracy as a natural graph invariant that allows for graphs of unbounded degree while permitting fast algorithms. For example, pattern counting on low degeneracy graphs has recently received considerable attention  \cite{DBLP:journals/jacm/BeraGLSS22, DBLP:conf/innovations/BeraPS20, DBLP:conf/soda/BeraPS21, DBLP:conf/pods/BeraS20, DBLP:conf/focs/0002R21, DBLP:journals/talg/GishbolinerLSY23}.

In this work, we propose partition constraints (\GDCs), a declarative version of graph degeneracy for higher-arity data. \GDCs naturally extend DCs in the sense that every DC can be expressed as \GDC, but not the other way around. Informally, for a \GDC to hold over a relation $R$, we require it to be possible to split $R$ such that each partition satisfies at least one DC. Again, for a binary edge relation $R=E$ of a directed graph $G=(V,E)$, this means that we can partition the edges into two sets $E_1,E_2$, such that the subgraph $G_1:=(V,E_1)$ has bounded out-degree while $G_2:=(V,E_2)$ has bounded in-degree.

We aim to provide a thorough analysis of the effect of \GDCs on conjunctive query answering. To that end, we show that \GDCs can provide asymptotic improvements to query bounds and evaluation, and we demonstrate how they can gracefully incorporate work on cardinality bounds and WCOJ algorithms for DCs. Further, we provide both exact and approximate algorithms for computing \GDCs and inspect to what extent \GDCs are present in well-known benchmark datasets.

\noindent \textbf{Summary of results.}
\begin{itemize}
    \item We introduce \GDCs as a generalization of DCs and provide two algorithms to determine \GDCs in polynomial time as well as to partition the data to witness these constraints.
    One is exact and runs in quadratic time, while the second runs in linear time and provides a constant factor approximation.

    \item We develop new bounds on the output size of a join query. These bounds use DC-based bounds as a black box and naturally extend them to incorporate \GDCs. We show that these bounds are asymptotically tighter than bounds that rely on DCs alone. Further, we show that if the DC-based bounds are tight, then the PC-based bounds built on top of them will be tight as well.
    
    \item Using WCOJ algorithms for DCs as a black box, we provide improved join algorithms that are worst-case-optimal relative to the tighter PC-based bound. Notably, if an algorithm were proposed that is worst-case optimal relative to a tight DC-based bound, then this would immediately result in a WCOJ algorithm relative to a tight PC-based bound.
\end{itemize}

\noindent \textbf{Structure of the paper.}
In Section~\ref{sec:prelim}, we provide some basic definitions and background. We formally introduce and discuss PCs in Section~\ref{sec:gdc} and compute them on common benchmark data. In Section~\ref{sec:hyprtriangle}, we illustrate the benefits of the PC framework by thoroughly analyzing a concrete example. The developed techniques are then extended in Section~\ref{sec:general} and applied to arbitrary CQs. We conclude and give some outlook to future work in Section~\ref{sec:conclusion}. 
Full proofs of some results are deferred to the appendix, and we instead focus on providing intuition in the main body.

\section{Preliminaries}


\textit{Conjunctive queries.}
We assume the reader is familiar with relational algebra, in particular with joins, projections, and selection, which will respectively be denoted by $\Join, \pi,$ and $\sigma$.
In the context of the present paper, a \textit{(conjunctive) query} (CQ), denoted using relational algebra, is an expression of the form
\begin{align*}
    Q(\mathbf{Z}) \leftarrow R_1(\mathbf{Z}_1)\Join \dots \Join R_k(\mathbf{Z}_k).
\end{align*}
In this expression, each $R_i(\mathbf{Z}_i)$ is a relation over the set of variables $\mathbf{Z}_i$, and $Q(\mathbf{Z})$ is the output of the query where $\mathbf{Z}= \bigcup_i \mathbf{Z}_i$.
Thus, we only consider \textit{full conjunctive query}.
When clear from the context, we omit the reference to the set of variables and simply write $R_i$ and $Q$.
We also use $Q$ to refer to the whole CQ.
A database instance $I$ for $Q$ is comprised of a concrete instance for each $R_i$, i.e., a set of tuples (also denoted by $R_i$) where each tuple $\textbf{z}_i$ contains a value for each variable in $\mathbf{Z}_i$. 
The set of values appearing in $I$ is referred to as $dom$, the domain of $I$.
Furthermore, let $dom(Z)$ be the values assigned to $Z\in \textbf{Z}$ by some relation $R_i$ and, for $\textbf{Y}\subseteq \textbf{Z}$, $dom(\textbf{Y}):= \times_{Y\in \textbf{Y}}dom(Y)$.
For a particular database instance $I$, we denote the answers to a CQ, i.e., the join of the $R_i$, as a relation $Q^I(\mathbf{Z})$. 
Lastly, a join algorithm $\mathcal{A}$ is any algorithm that receives a query $Q$ and a database instance $I$ as input and outputs the relation $Q^I(\mathbf{Z})$.


\textit{Degree constraints and bounds.}
In recent years, there has been a series of novel approaches to bound the size of $Q^I(\mathbf{Z})$ based on the structure of the query $Q$ and a set of statistics about the database instance $I$. For full conjunctive queries, this recent round of work began with the AGM bound \cite{atserias2013size}. This bound takes the size of each input relation as the statistics and connects the size of the result to a weighted edge covering of the hypergraph induced by $Q$. Later bounds extended this approach by considering more complex statistics about the input relations. Functional dependencies were investigated first in \cite{DBLP:journals/jacm/GottlobLVV12}. These were then generalized to degree constraints (DCs) in \cite{DBLP:conf/pods/KhamisNS16} and degree sequences (DSs) in \cite{DBLP:conf/icdt/DeedsSBC23,DBLP:journals/pacmmod/DeedsSB23,DBLP:journals/pacmmod/KhamisNOS24}. Our work in this paper continues in this vein by generalizing degree constraints further to account for the benefits of partitioning relations. 
So, we begin by defining degree constraints below.
\begin{definition}
Fix a particular relation $R({\bf Z})$.
Given two sets of variables $\mathbf{X}, \mathbf{Y}$ where $\mathbf{X}\subseteq\mathbf{Y}\subseteq \mathbf{Z}$, a degree constraint of the form $DC_R({\bf X}, {\bf Y}, d)$ implies the following, 
\begin{align*}
    \max_{{\bf x}\in dom(\bf X)}|\pi_{\mathbf{Y}}\sigma_{\mathbf{X} = \mathbf{x}}R| \leq d.
\end{align*}
A database instance $I$ satisfying a (set of) degree constraints $\mathbf{DC}$ is denoted by $I \vDash \mathbf{DC}$.
For convenience, we denote the minimal $d$ such that $DC_R({\bf X}, {\bf Y}, d)$ holds by $DC_R(\mathbf{X},\mathbf{Y})$.
If $\mathbf{Y}=\mathbf{Z}$ and $\mathbf{X} = \emptyset$ the constraint simply bounds the cardinality of $R$ and we write $CC_R(d)$.
If clear from the context, we may omit $R$.
\label{def:degree-constraint}
\end{definition}

In addition to generalizing DCs, our work is able to refine any of the previous bounding methods for DCs by incorporating them into our framework. Therefore, we introduce a general notation for DC-based bounds.
\begin{definition} 
\label{def:cardinality-bound}    
Given a conjunctive query $Q$ and a set of degree constraints $\mathbf{DC}$, a cardinality bound $CB(Q,\mathbf{DC})$ is any function where the following is true,
\begin{align*}
    |Q^I(\mathbf{Z})| \leq CB(Q,\mathbf{DC}) \quad\forall \,\, I\vDash \mathbf{DC}.
\end{align*}
\end{definition}


Throughout this work, we will make specific reference to the combinatorics bound which we describe below. While it is impractical, this bound is computable and tight which makes it useful for theoretical analyses.

\begin{definition}
Given a conjunctive query $Q$ and a set of degree constraints $\mathbf{DC}$, we define the combinatorics bound (for degree constraints) as 
\begin{align*}
    CB_{\text{Comb}}(Q,\mathbf{DC}) = \sup_{I\vDash\mathbf{DC}}|Q^I(\mathbf{Z})|.
\end{align*}
\end{definition}
In this work, we make two minimal assumptions about bounds and statistics. First, we assume that cardinality bounds are finite by assuming that every variable in the query is covered by at least one cardinality constraint. Second, we assume that there is a bound on the growth of the bounds as our statistics increase. Formally, for a fixed query $Q$, we assume that every cardinality bound $CB$ has a function $f_Q$ such that for any $\alpha \in \mathbb{R}^+$ and degree constraints $\textbf{DC}$ we have $CB(Q,\alpha \cdot\textbf{DC})\leq f_Q(\alpha )\cdot CB(Q,\textbf{DC})$. Usually, cardinality bounds are expressed in terms of power products of the degree constraints \cite{atserias2013size,DBLP:journals/jacm/GottlobLVV12,DBLP:journals/pacmmod/KhamisNOS24}. In that case, $f_Q=O(\alpha ^c)$ for some constant $c$.

Lastly, we will also incorporate existing work on worst-case optimal join (WCOJ) algorithms. Each WCOJ algorithm is optimal relative to a particular cardinality bound, and we describe that relationship formally as follows:
\begin{definition}
\label{def:worst-case-optimal}
    Denote the runtime of a join algorithm $\mathcal{A}$ on a conjunctive query $Q$ and database instance $I$ as $\mathcal{T}(\mathcal{A}, Q, I)$. 
    $\mathcal{A}$ is worst-case optimal (WCO) relative to a cardinality bound $CB_{\mathcal{A}}$ if the following is true for all queries $Q$,
    \begin{align*}
        \mathcal{T}(\mathcal{A}, Q, I) = O(|I| + CB_{\mathcal{A}}(Q, \mathbf{DC})), \quad \forall \,\,\mathbf{DC}, \,\,I\vDash\mathbf{DC}.
    \end{align*}
    Note that the hidden constant may only depend on the query $Q$ and, importantly, neither on the set of degree constraints $\mathbf{DC}$ nor on the database instance $I$.
    When $\mathcal{A}$ is optimal relative to $CB_{\text{Comb}}$, we simply call it worst-case optimal (relative to degree constraints).
\end{definition}
Note that some algorithms fulfill a slightly looser definition of worst-case optimal by allowing additional poly-logarithmic factors \cite{DBLP:conf/pods/KhamisNS17}. These algorithms can still be incorporated into this framework, but we choose the stricter definition to emphasize that we do not induce additional factors of this sort.

Much of the recent work on WCOJ algorithms can be categorized as \emph{variable-at-a-time} (VAAT) algorithms. Intuitively, a VAAT computes the answers to a join query by answering the query for an increasingly large number of variables. 
This idea is at the heart of Generic Join \cite{DBLP:journals/sigmod/NgoRR13}, Leapfrog Triejoin \cite{DBLP:conf/icdt/Veldhuizen14}, and NPRR \cite{DBLP:conf/pods/NgoPRR12}. 
We define this category formally as follows. 
For an arbitrary query $Q(\textbf{Z})\leftarrow R_1 \Join \dots \Join R_k$, let $Q_{i}$ denote the sub-query
$$Q_{i}(Z_1,\dots Z_i) \leftarrow \pi_{Z_1,\dots,Z_i} R_1 \Join \dots \Join \pi_{Z_1,\dots,Z_i} R_k,$$
for an ordering $Z_1,\dots,Z_r = \textbf{Z}$.

\begin{definition}
\label{def:vaat}
    A join algorithm $\mathcal{A}$ that solves conjunctive queries $Q(\mathbf{Z}) \leftarrow R_1(\mathbf{Z}_1) \Join \dots \Join R_k(\mathbf{Z}_k)$ is a VAAT algorithm if there is some ordering $Z_1,\dots,Z_r = \mathbf{Z}$ such that $\mathcal{A}$ takes time $\Omega(\max_i|Q_{i}^I|)$ for databases $I$.
    The choice of the ordering is allowed to depend on $I$.
\end{definition}

\label{sec:prelim}

\section{Partition Constraints}
\label{sec:gdc}

Partition constraints (PCs) extend the concept of DCs by allowing for the partitioning of relations. For clarity, we start with the binary setting, revisiting the example from the introduction. To that end, let $E(X,Y)$ be the edge relation of a directed graph $G=(V,E)$. For a degree constraint to hold on $E$, either the in- or the out-degree of $E$ must be bounded. However, this is a strong requirement and often may not be satisfied, e.g. on a highly skewed social network graph. In these cases, it makes sense to look for further latent structure in the data. We suggest partitioning $E$ such that each part satisfies a (different) degree constraint. We are interested in the following quantity where the minimum is taken over all bi-partitions of $E$, i.e. $E = E^X\cupdot E^Y$ (implicitly $E^X\cap E^Y=\emptyset$),
\begin{align}
    \min_{E^{X}\cupdot E^{Y} = E}\max\{DC_{E^X}(X, XY), DC_{E^Y}(Y, XY)\}.
    \label{eq:quantity}
\end{align}
This corresponds to a splitting of the graph into two graphs.  In one of them, $G^X=(V,E^X)$, we attempt to minimize the maximum out-degree of the graph while in the other, $G^Y=(V,E^Y)$, we attempt to minimize the maximum in-degree.  It is important to note that both of these quantities can be arbitrarily lower than the maximum in-degree and out-degree of the original graph.  This is where the benefit of considering partitions comes from.

An example of such a partitioning is depicted in Figure \ref{fig:graphExample}. There, the dashed blue edges represent $E^X$ while the solid red edges represent $E^Y$. The maximum out- and in-degree of the full graph is 5 while $G^X=(V,E^X)$ has a maximum out-degree of 1 and $G^Y=(V,E^Y)$ has a maximum in-degree of 1. In general, a class of graphs can have an unbounded out- and in-degree but always admit a bi-partitioning with an out- and in-degree of 1, respectively.
    \begin{figure}
    \begin{minipage}[b]{0.33\textwidth}
    \centering
    \resizebox{\textwidth}{!}{
    \begin{tikzpicture}
        \foreach \angle/\label in {112.5/A, 67.5/A2, 22.5/B, -22.5/C, -67.5/D, -112.5/D2, -157.5/E, 157.5/F}
        {
            \node (\label) at (\angle:1.2) { $\bullet$};
        }
        \draw [->, draw=red](A) -- (B);
        \draw [->, draw=red](A) -- (C);
        \draw [->, draw=red](A) -- (E);
        
        \draw [->, draw=red](B) -- (A);
        \draw [->, draw=red](B) -- (F);
        \draw [->, draw=red](B) -- (D);
        \draw [->, draw=red](A) -- (A2);
        \draw [->, draw=red](E) -- (D2);

        \draw [->, dash pattern=on 1pt off 1pt, draw=blue](A) -- (F);
        \draw [->, dash pattern=on 1pt off 1pt, draw=blue](B) -- (E);
        \draw [->, dash pattern=on 1pt off 1pt, draw=blue](C) -- (D);
        \draw [->, dash pattern=on 1pt off 1pt, draw=blue](D) -- (F);
        \draw [->, dash pattern=on 1pt off 1pt, draw=blue](E) -- (F);
        \draw [->, dash pattern=on 1pt off 1pt, draw=blue](F) -- (C);
        \draw [->, dash pattern=on 1pt off 1pt, draw=blue](A2) -- (D2);
        \draw [->, dash pattern=on 1pt off 1pt, draw=blue](D2) -- (F);
    \end{tikzpicture}
    }
        \caption{Graph Example.}
        \label{fig:graphExample}

    \end{minipage}
    \hfill
    \begin{minipage}[b]{0.6\textwidth}
        \small
        \begin{tabular}{|l|l|}
        \multicolumn{2}{c}{\texttt{Access}} \\ \hline
        \texttt{PersonID}     &  \texttt{RoomID} \\ \hline
         Ava    & Beacon Hall \\ \hline
         Ben    & Beacon Hall \\ \hline
         Cole    & Delta Hall \\ \hline
         Dan    & Delta Hall \\ \hline
         Emma    & Gala Hall \\ \hline
         Finn    & Jade Hall \\ \hline
         Porter    & Beacon Hall \\ \hline
         Porter    & Delta Hall \\ \hline
         Porter    & Gala Hall \\ \hline
         Porter    & Jade Hall \\ \hline
        \end{tabular}
        \hfill
        \begin{tabular}{|l|l|}
        \multicolumn{2}{c}{$\texttt{Access}^\texttt{PersonID}$} \\ \hline
        \texttt{PersonID}     &  \texttt{RoomID} \\ \hline
         Ava    & Beacon Hall \\ \hline
         Ben    & Beacon Hall \\ \hline
         Cole    & Delta Hall \\ \hline
         Dan    & Delta Hall \\ \hline
         Emma    & Gala Hall \\ \hline
         Finn    & Jade Hall \\ \hline 
         \multicolumn{2}{c}{} \\
        \multicolumn{2}{c}{$\texttt{Access}^\texttt{RoomID}$} \\ \hline
        \texttt{PersonID}     &  \texttt{RoomID} \\ \hline         
         Porter    & Beacon Hall \\ \hline
         Porter    & Delta Hall \\ \hline
         Porter    & Gala Hall \\ \hline
         Porter    & Jade Hall \\ \hline
        \end{tabular}
        \caption{Access Example.}
        \label{fig:access}
    \end{minipage}
    \end{figure}

This approach is originally motivated by the graph property \textit{degeneracy}. Intuitively, this property tries to allocate each edge of an undirected graph to one of its incident vertices such that no vertex is ``responsible'' for too many edges. Each partition $E = E^X\cupdot E ^Y$ can be seen as a possible allocation where the edges in $E ^X$ are allocated to the $X$ part of the tuples while the edges in $E^Y$ are allocated to the $Y$ part of the tuples. 
Thus, in general, if the undirected version of a graph class $\mathcal{G}$ has bounded degeneracy, the Quantity \ref{eq:quantity} must also be bounded for $\mathcal{G}$.
In fact, the converse is true as well if the domain of the $X$ and $Y$ attributes are disjoint.

Formally, we define a \GDC on a relation $R$ as below. Note, we say a collection of subrelations $(R^1,\dots, R^k)$ partition $R$ when they are pairwise disjoint and $\bigcup_j R^j=R$.
\begin{definition}
Fix a particular relation $R(\mathbf{Z})$, a subset $\mathbf{Y}\subseteq \mathbf{Z}$, and let $\mathcal{X} = \{\mathbf{X}^1,...,\mathbf{X}^{|\mathcal{X}|}\} \subseteq 2^\mathbf{Y}$ be a set of sets of variables. 
Then, a partition constraint of the form $\GDC_R(\mathcal{X}, {\bf Y}, d)$ implies,
    \begin{align*}
    \min_{(R^\mathbf{X})_{\mathbf{X}\in \mathcal{X}} \text{ partition } R} \max \{DC_{R^\mathbf{X}}(\mathbf{X}, {\bf Y})\mid \mathbf{X} \in \mathcal{X}\} \leq d.
    \end{align*}
Again, a database instance $I$ satisfying a (set of) partition constraints $\mathbf{\GDC}$ is denoted by $I \vDash \mathbf{\GDC}$.
We denote the minimal $d$ such that $\GDC_R(\mathcal{X}, {\bf Y}, d)$ holds by $\GDC_R(\mathcal{X},\mathbf{Y})$
and we omit $R$ if clear from the context.
\end{definition}


This definition says that one can split the relation $R$ into disjoint subsets $R^j \subseteq R, \bigcup_j R^j = R$ and associate each $R^j$ with a degree constraint over a set of variables $\textbf{X}^j\in \mathcal{X}$ such that the maximum of these constraints is then bounded by $d$. From an algorithmic point of view, each part $R^j$ should be handled differently to make use of its unique, tighter DC. The core of this work is in describing how these partitions can be computed and how to make use of the new constraints on each part. Broadly, we show how these new constraints produce tighter bounds on the join size and how algorithms can meet these bounds. Note \GDCs are a strict generalization of DCs since we can define an arbitrary degree constraint $DC(\mathbf{X}, \mathbf{Y}, d)$ as $\GDC(\{\mathbf{X}\}, \mathbf{Y}, d)$. For simplicity, when we discuss a collection of \GDCs, we assume that there is at most one \GDCs per $(\mathcal{X}, \textbf{Y})$ pair, i.e., there are never two $\GDC_R(\mathcal{X}, \mathbf{Y}, d), \GDC_R(\mathcal{X}, \mathbf{Y}, d'), d\neq d',$ as it suffices to keep the stronger constraint.

Next, we present an example of how relations with small \GDCs might arise in applications.

\begin{example}
    Consider a relation $\texttt{Access(PersonID, RoomID)}$ that records who has access to which room at a university.  This could be used to control the key card access of all faculty members, students, security, and cleaning personnel. Most people (faculty members and students) only need access to a limited number of rooms (lecture halls and offices). On the other hand, porters and other caretaker personnel need access to many different rooms, possibly all of them.  However, each room only needs a small number of people taking care of it.  Thus, it makes sense to partition $\texttt{Access}$ into $\texttt{Access}^\texttt{PersonID}\cup \texttt{Access}^\texttt{RoomID}$ with the former tracking the access restrictions of the faculty members and students, and the latter tracking the access restrictions of the caretaker personnel. With this partition, both $DC_{\texttt{Access}^\texttt{PersonID}}(\texttt{PersonID}, \texttt{RoomID})$ and $DC_{\texttt{Access}^\texttt{RoomID}}(\texttt{RoomID}, \texttt{PersonID})$ should be small. 
    Thus, $\GDC_{\texttt{Access}}(\{\texttt{PersonID}, \texttt{RoomID}\}, \texttt{PersonID RoomID})$ should be small as well.

    Figure \ref{fig:access} depicts a small example instance of the situation. There, the students each only need access to a single lecture hall to attend their courses, and all the rooms are taken care of by a single porter. Thus, following the suggested splitting, the respective DC for both subrelations $\texttt{Access}^\texttt{PersonID}$ and $\texttt{Access}^\texttt{RoomID}$ is 1 and, therefore, also the PC for the whole relation \texttt{Access} is 1. 
\end{example}

To see how these statistics manifest in real world data, we calculated the \GDC of relations from some standard benchmarks which are displayed in Figure~\ref{table:gdcs} \cite{DBLP:conf/sigmod/Sun020,DBLP:journals/pvldb/HanWWZYTZCQPQZL21,DBLP:journals/pvldb/LeisGMBK015}. We computed the \GDC using Algorithm \ref{alg:exactdecomp} from Section \ref{sec:general}. To model the interesting case of many-to-many joins, we first removed any attributes which are primary keys from each relation. Specifically, we computed the partition constraint $\GDC(\{X\mid X\in \mathbf{Y}\}, \mathbf{Y})$ where $\mathbf{Y}$ is the set of non-key attributes. 
We compare this with the minimum and maximum degree of these attributes before partitioning. Naively, by partitioning the data randomly, one would expect a \GDC roughly equal to the maximum DC divided by $|\mathcal{X}|$. Alternatively, by placing all tuples in the partition corresponding to the minimum DC, one can achieve a \GDC equal to the minimum DC. However, the computed \GDC is often much lower than both of these quantities. This implies that the partitioning is uncovering useful structure in the data rather than simply distributing high-degree values over multiple partitions.

\begin{figure}
\begin{minipage}[b]{0.6\textwidth}
\small
\begin{tabular}{|l|r|r|r|r|}
\hline
\textbf{Dataset}      & \textbf{Max DC} & \textbf{Min DC} & \textbf{\GDC} & $|\mathcal{X}|$ \\ \hline
aids                  & 11              & 11              & 3             & 2              \\ \hline
yeast                 & 154             & 119             & 9             & 2              \\ \hline
dblp                  & 321             & 113             & 34            & 2              \\ \hline
wordnet               & 526             & 284             & 3             & 2              \\ \hline
Stats/badges          & 899             & 456             & 8             & 2              \\ \hline
Stats/comments        & 134887          & 45              & 15            & 3              \\ \hline
Stats/post\_links     & 10186           & 13              & 2             & 4              \\ \hline
Stats/post\_history   & 91976           & 32              & 3             & 4              \\ \hline
Stats/votes           & 326320          & 427             & 33            & 4              \\ \hline
IMDB/keywords         & 72496           & 540             & 71            & 2              \\ \hline
IMDB/companies        & 1334883         & 94              & 13            & 4              \\ \hline
IMDB/info             & 13398741        & 2937            & 123           & 4              \\ \hline
IMDB/cast             & 25459763        & 1741            & 52            & 6              \\ \hline
\end{tabular}
\caption{Example \GDCs.}
\label{table:gdcs}
\end{minipage}
\hfill
\begin{minipage}[b]{0.35\textwidth}
\centering
\resizebox{.8\textwidth}{!}{
\begin{tikzpicture}
    \foreach \angle/\label in {90/A, 30/W, -30/B, -90/U, -150/C, 150/V}
    {
        \node (\label) at (\angle:1.5) {\Large $\label$};
    }
    \draw [rounded corners=.9cm] (110:2.3) -- (30:2.3) -- (-50:2.3) -- cycle;
    \draw [rounded corners=.9cm] (70:2.3) -- (150:2.3) -- (230:2.3) -- cycle;
    \draw [rounded corners=.9cm] (190:2.3) -- (270:2.3) -- (-10:2.3) -- cycle;
    \draw [rounded corners=.9cm] (30:2.3) -- (150:2.3) -- (270:2.3) -- cycle;
\end{tikzpicture}
}
    \caption{The Query $\hexq$.}
    \label{fig:hypertriangle}
\end{minipage}

\end{figure}

\subsection{Further Partitioning}
At this point, one might wonder if further partitioning the data can meaningfully reduce a \GDC. That is, whether for a given relation $R(\textbf{Z})$ and a particular set of degree constraints $\{DC_R(\textbf{X},\textbf{Y})\mid \textbf{X}\in \mathcal{X}\}$, is it possible to decrease the maximum DC significantly by partitioning $R$ into more than $|\mathcal{X}|$ parts?  We show that this is not the case; neither pre-partitioning nor post-partitioning the data into $k$ parts can reduce the \GDC by more than a factor of $k$. We prove the former first.

\begin{restatable}{proposition}{thmpre}
\label{thm:pre}
    Given a relation $R(\mathbf{Z})$, a subset $\mathbf{Y}\subseteq \mathbf{Z}$, variable sets $\mathcal{X} \subseteq 2^\mathbf{Y}$, and subrelations $(R^1,\dots,R^k)$ that partition $R$. 
    Then,
    \begin{align*}
        \max_{i=1,\dots,k} \GDC_{R^i}(\mathcal{X},\mathbf{Y})\geq \GDC_{R}(\mathcal{X},\mathbf{Y})/k.
    \end{align*}
\end{restatable}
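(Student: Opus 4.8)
The plan is to exploit the witnessing partitions of the individual parts $R^i$ and stitch them together into a single partition of $R$, then invoke subadditivity of degree constraints under unions. For each $i$, let $(R^{i,\mathbf{X}})_{\mathbf{X}\in\mathcal{X}}$ be a partition of $R^i$ that achieves $\GDC_{R^i}(\mathcal{X},\mathbf{Y})$, so that $DC_{R^{i,\mathbf{X}}}(\mathbf{X},\mathbf{Y})\leq \GDC_{R^i}(\mathcal{X},\mathbf{Y})$ for every $\mathbf{X}\in\mathcal{X}$. For each $\mathbf{X}\in\mathcal{X}$, define the union $R^{\mathbf{X}} := \bigcup_{i=1}^k R^{i,\mathbf{X}}$.

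First I would verify that $(R^{\mathbf{X}})_{\mathbf{X}\in\mathcal{X}}$ is a genuine partition of $R$. Since the $R^i$ are pairwise disjoint and cover $R$, and within each $R^i$ the pieces $R^{i,\mathbf{X}}$ are pairwise disjoint and cover $R^i$, the collection $(R^{\mathbf{X}})_{\mathbf{X}\in\mathcal{X}}$ is pairwise disjoint and covers $R$. Hence it is a legitimate candidate in the minimisation defining $\GDC_R(\mathcal{X},\mathbf{Y})$.

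The key step is to bound $DC_{R^{\mathbf{X}}}(\mathbf{X},\mathbf{Y})$ by the sum of the per-part degrees. For any fixed $\mathbf{x}\in dom(\mathbf{X})$, the projection of a union equals the union of the projections, so $\pi_{\mathbf{Y}}\sigma_{\mathbf{X}=\mathbf{x}}R^{\mathbf{X}} = \bigcup_{i}\pi_{\mathbf{Y}}\sigma_{\mathbf{X}=\mathbf{x}}R^{i,\mathbf{X}}$, and therefore $|\pi_{\mathbf{Y}}\sigma_{\mathbf{X}=\mathbf{x}}R^{\mathbf{X}}| \leq \sum_{i}|\pi_{\mathbf{Y}}\sigma_{\mathbf{X}=\mathbf{x}}R^{i,\mathbf{X}}| \leq \sum_{i} DC_{R^{i,\mathbf{X}}}(\mathbf{X},\mathbf{Y}) \leq k\cdot\max_{j}\GDC_{R^j}(\mathcal{X},\mathbf{Y})$. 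Taking the maximum over $\mathbf{x}$ yields $DC_{R^{\mathbf{X}}}(\mathbf{X},\mathbf{Y})\leq k\cdot\max_{j}\GDC_{R^j}(\mathcal{X},\mathbf{Y})$ uniformly in $\mathbf{X}\in\mathcal{X}$.

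Finally, because $\GDC_R(\mathcal{X},\mathbf{Y})$ is the minimum over all partitions of the maximal DC, it is in particular at most the value attained by the constructed partition, giving $\GDC_R(\mathcal{X},\mathbf{Y})\leq \max_{\mathbf{X}\in\mathcal{X}}DC_{R^{\mathbf{X}}}(\mathbf{X},\mathbf{Y}) \leq k\cdot\max_{j}\GDC_{R^j}(\mathcal{X},\mathbf{Y})$, which rearranges to the claim. I do not anticipate a serious obstacle; the only point requiring care is the subadditivity estimate, where one must remember that a degree constraint bounds the cardinality of a set of projected tuples, so these counts add across a union (rather than any stronger exact preservation being available), which is precisely what produces the factor of $k$.
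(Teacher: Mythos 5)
Your proof is correct and uses essentially the same argument as the paper: stitch the witnessing partitions of the $R^i$ together by constraint set $\mathbf{X}$ and bound each combined part's degree by the sum of the per-part degrees. The only cosmetic difference is that you argue directly that $\GDC_R(\mathcal{X},\mathbf{Y})\leq k\cdot\max_j \GDC_{R^j}(\mathcal{X},\mathbf{Y})$, whereas the paper phrases the identical construction as a proof by contradiction.
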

\begin{proof}
For the sake of contradiction, we assume that there exists a partitioning $(R^1,\dots,R^k)$ of $R$ such that,
    \begin{align*}
        \max_{i=1,\dots,k}\GDC_{R^i}(\mathcal{X},\mathbf{Y})< \GDC_{R}(\mathcal{X},\mathbf{Y})/k.
    \end{align*}
Then, we can partition each $R^i$ to witness $\GDC_{R^i}(\mathcal{X},\mathbf{Y})$. Let $\bigcup_{\textbf{X}\in \mathcal{X}} R^{i,\textbf{X}}=R^i$ be such that $DC_{R^{i,\textbf{X}}}(\textbf{X}, \textbf{Y})\leq \GDC_{R^i}(\mathcal{X},\mathbf{Y})$ holds for each part $R^{i,\textbf{X}}$. For each fixed $\textbf{X}\in \mathcal{X}$, we can combine the sub-relations $R^{1,\textbf{X}}, \dots, R^{k,\textbf{X}}$ into a single relation $R^\textbf{X}$. These relations, $(R^\textbf{X})_{\textbf{X}\in \mathcal{X}}$, form a partition of $R$.
The DC for each $R^\textbf{X}$ is at most the sum of the DCs of $R^{1,\textbf{X}}, \dots, R^{k,\textbf{X}}$. Further, this sum must be less than our initial \GDC by our assumption,
\begin{align*}
    DC_{R^{\textbf{X}}}(\textbf{X},\textbf{Y}) \leq k\cdot \GDC_{R^i}(\mathcal{X},\mathbf{Y})<\GDC_{R}(\mathcal{X},\mathbf{Y}).
\end{align*}
This directly implies that
\begin{align*}
    \max_{\mathbf{X}\in\mathcal{X}}DC_{R^{\mathbf{X}}}(\mathbf{X},\textbf{Y}) < \GDC_{R}(\mathcal{X},\mathbf{Y}).
\end{align*}
Because the PC is defined as the minimum value of this maximum DC over all possible partitions of $R$ into $|\mathcal{X}|$ parts, this is a contradiction.
\end{proof}
Next, we show that post-partitioning cannot super-linearly reduce the degree either.
\begin{proposition}
\label{thm:post}
    Let $(R^1,\dots,R^k)$ partition a relation $R(\mathbf{Z})$, and let $\mathbf{X}\subseteq \mathbf{Y}\subseteq \mathbf{Z}$ be two subsets of variables. Then,    \begin{align*}
        \max_{i=1,\dots,k} DC_{R^i}(\mathbf{X}, \mathbf{Y}) \geq DC_{R}(\mathbf{X}, \mathbf{Y})/k.
    \end{align*}
\end{proposition}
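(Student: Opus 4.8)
The plan is to exploit the subadditivity of degree constraints under unions of relations---the same core fact that already drives the proof of Proposition~\ref{thm:pre}. First I would unfold the definition of the degree constraint: by Definition~\ref{def:degree-constraint}, $DC_R(\mathbf{X},\mathbf{Y}) = \max_{\mathbf{x}\in dom(\mathbf{X})}|\pi_{\mathbf{Y}}\sigma_{\mathbf{X}=\mathbf{x}}R|$. Let $\mathbf{x}^*$ be a value of $\mathbf{X}$ that attains this maximum, so that $DC_R(\mathbf{X},\mathbf{Y}) = |\pi_{\mathbf{Y}}\sigma_{\mathbf{X}=\mathbf{x}^*}R|$. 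This reduces the claim to a statement about a single fixed instantiation of $\mathbf{X}$.

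Next, since $(R^1,\dots,R^k)$ partition $R$, selection distributes over the union, giving $\sigma_{\mathbf{X}=\mathbf{x}^*}R = \bigcup_{i} \sigma_{\mathbf{X}=\mathbf{x}^*}R^i$; applying $\pi_{\mathbf{Y}}$ to both sides then yields $\pi_{\mathbf{Y}}\sigma_{\mathbf{X}=\mathbf{x}^*}R = \bigcup_{i} \pi_{\mathbf{Y}}\sigma_{\mathbf{X}=\mathbf{x}^*}R^i$. I would then bound the cardinality of this union by the sum of the cardinalities of its parts, and bound each part using the corresponding per-part degree constraint:
\begin{align*}
    DC_R(\mathbf{X},\mathbf{Y}) = \left|\bigcup_{i=1}^k \pi_{\mathbf{Y}}\sigma_{\mathbf{X}=\mathbf{x}^*}R^i\right| \leq \sum_{i=1}^k \left|\pi_{\mathbf{Y}}\sigma_{\mathbf{X}=\mathbf{x}^*}R^i\right| \leq \sum_{i=1}^k DC_{R^i}(\mathbf{X},\mathbf{Y}).
\end{align*}
The final inequality is valid because $DC_{R^i}(\mathbf{X},\mathbf{Y})$ bounds the projected selection size for \emph{every} value of $\mathbf{X}$, and in particular for $\mathbf{x}^*$. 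Bounding the sum by $k$ times its largest term gives $DC_R(\mathbf{X},\mathbf{Y}) \leq k\cdot \max_i DC_{R^i}(\mathbf{X},\mathbf{Y})$, and dividing through by $k$ delivers the claim.

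The only point that requires care---and the closest thing to an obstacle here---is that projection does not preserve disjointness: although the $R^i$ are pairwise disjoint, the projected sets $\pi_{\mathbf{Y}}\sigma_{\mathbf{X}=\mathbf{x}^*}R^i$ may overlap, so their degrees do not add up exactly. This is precisely why the relation in the second line is an inequality rather than an equality, and it is the reason no constant better than $k$ can be extracted from this argument in general. Fortunately, mere subadditivity of cardinality under union is all the statement demands, so this causes no genuine difficulty; the proof is essentially the ``inner'' argument of Proposition~\ref{thm:pre} specialized to a single degree constraint.
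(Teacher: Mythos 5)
Your proof is correct and is essentially the same argument as the paper's: the paper fixes the maximizing value $\mathbf{x}$ and applies the pigeonhole principle to the $DC_R(\mathbf{X},\mathbf{Y})$ distinct $\mathbf{Y}$-values, which is exactly the contrapositive of your subadditivity chain $DC_R(\mathbf{X},\mathbf{Y}) \leq \sum_i DC_{R^i}(\mathbf{X},\mathbf{Y}) \leq k\cdot\max_i DC_{R^i}(\mathbf{X},\mathbf{Y})$.
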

\begin{proof}
    Let $\mathbf{x}$ be the value of $\mathbf{X}$ within $R$ that occurs in tuples with $d=DC_{R}(\mathbf{X},\mathbf{Y})$ unique values $\textbf{y}$ of $\mathbf{Y}$. 
    For each of these values $\mathbf{y}$, a tuple containing $\mathbf{y}$ must be associated with one partition $R^i$.
    By the pigeonhole principle and the fact that there are $DC_{R}(\mathbf{X}, \mathbf{Y})$ of these tuples, it follows that some partition must contain at least $DC_{R}(\mathbf{X}, \mathbf{Y})/k$ of these tuples. 
    Therefore, for some $i\in \{1,\dots,k\}$, we have $DC_{P_i}(\mathbf{X}, \mathbf{Y})\geq DC_{R}(\mathbf{X}, \mathbf{Y})/k$.
\end{proof}

Theorems \ref{thm:pre} and \ref{thm:post} together show that for a given relation ${R}(\textbf{Y})$ and a particular set of degree constraints $\{DC_R(\textbf{X},\textbf{Z})\mid \textbf{X}\in \mathcal{X}\}$ deemed relevant, we only have to consider partitionings of $R$ into at most $|\mathcal{X}|$ pieces. 
Thus, if the query size is viewed as a constant, then the number of useful partitions is also a constant.




\section{The Hexagon Query}
\label{sec:hyprtriangle}

In this section, we show the benefits of the \GDC framework by demonstrating how it can lead to asymptotic improvements for both cardinality bounds and conjunctive query evaluation. Concretely, there is an example query and class of database instances where bounds based on \textbf{\GDC} are asymptotically tighter than those based on \textbf{DC}. Further, all VAAT algorithms (See Definition~\ref{def:vaat}) are asymptotically slower than a \textbf{\GDC}-aware algorithm on this query and instance class. 
Formally, our aim is to show the following:
\begin{theorem}
    \label{thm:hex_query}
    There exists a query $Q$ and a set of partition constraints $\mathbf{\GDC}$ with the degree constraint subset $\mathbf{DC}\subset\mathbf{\GDC}$ such that the following are true:
    \begin{enumerate}
        \item Bounds based on degree constraints are asymptotically sub-optimal, i.e.
        \begin{align*}
            \sup_{I\vDash\mathbf{PC}}|Q^I(\mathbf{Z})| = o(CB_{Comb}(Q,\mathbf{DC})) = o(\sup_{I\vDash\mathbf{DC}}|Q^I(\mathbf{Z})|).
        \end{align*}
        \item There is an algorithm that enumerates $Q^I$ for instances $I\vDash\mathbf{\GDC}$ in time $O(\sup_{I\vDash\mathbf{PC}}|Q^I(\mathbf{Z})|)$.
        \item No VAAT algorithms can enumerate $Q^I$ for instances $I\vDash\mathbf{\GDC}$ in time $O(\sup_{I\vDash\mathbf{PC}}|Q^I(\mathbf{Z})|)$.
    \end{enumerate}
\end{theorem}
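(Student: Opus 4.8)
The plan is to instantiate the theorem with the hexagon query $\hexq$ of Figure~\ref{fig:hypertriangle}, viewed as the four ternary relations $R_1(A,B,W)$, $R_2(A,C,V)$, $R_3(B,C,U)$, $R_4(U,V,W)$. The structural observation I would exploit is that the six variables split into three ``opposite'' pairs $\{A,U\},\{B,V\},\{C,W\}$ -- exactly the pairs that never co-occur in a relation -- and the four relations are precisely the transversals picking an odd number of ``second'' elements; equivalently, the primal graph is $K_6$ minus the perfect matching $M=\{\{A,U\},\{B,V\},\{C,W\}\}$. For the constraints I take $\mathbf{DC}$ to be the four cardinality constraints $CC_{R_i}(N)$, so that the fractional edge cover number is $2$ and $CB_{\text{Comb}}(\hexq,\mathbf{DC})=\Theta(N^2)$ (the product instance is tight). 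For $\mathbf{PC}$ I add to $\mathbf{DC}$ one genuine partition constraint per relation, $\GDC_{R_i}(\{X\mid X\in \mathbf{Z}_i\},\mathbf{Z}_i,d)$, i.e.\ each relation may be split into three parts, one per variable, in which that variable has degree at most $d$ towards the other two.

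For part~(1), I would upper bound $\sup_{I\vDash\mathbf{PC}}|\hexq^I|$ by expanding each $I$ along its witnessing partitions: the answer is the union, over the $3^4$ choices of one part per relation, of the joins of the chosen parts, and by Propositions~\ref{thm:pre} and~\ref{thm:post} it suffices to consider partitions into $|\mathcal{X}|=3$ pieces. Each combination is a query whose relations carry both a cardinality constraint $N$ and one oriented degree constraint $d$; a direct variable-elimination argument (root one variable, pay a factor $d$ for each subsequently determined variable) bounds every combination by $O(N d^{c})$ for a small constant $c$ (one verifies $c\le 3$). Hence $\sup_{I\vDash\mathbf{PC}}|\hexq^I| = O(N d^{c})$, and choosing $d=N^{1/4}$ yields $O(N^{7/4})=o(N^2)=\Theta(CB_{\text{Comb}}(\hexq,\mathbf{DC}))$, which is part~(1). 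Part~(2) is the matching algorithm: compute the witnessing partitions (Algorithm~\ref{alg:exactdecomp}, or its linear-time constant-factor variant), run for each of the constantly many part-combinations the degree-constraint join routine that meets its $O(Nd^{c})$ bound, and take the union. The total time is $O(|I| + \sum_{\text{comb}} Nd^{c}) = O(\sup_{I\vDash\mathbf{PC}}|\hexq^I|)$, since there are $O(1)$ combinations and $|I|=O(N)$ is dominated by the bound.

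Part~(3) is the heart of the argument. I would classify, for a prefix $S\subseteq\mathbf{Z}$, whether $\sup_{I\vDash\mathbf{PC}}|Q_S^I|$ can exceed the full bound $O(Nd^c)$. The \emph{good} prefixes are those forming a relation or a triangle on one variable per opposite pair (a transversal of $M$); these have subquery bound $O(Nd)$. The \emph{bad} prefixes are those containing a full opposite pair: a diagonal $2$-set like $\{A,U\}$, a path $3$-set like $\{A,B,U\}$, and a $(2,1,1)$ $4$-set like $\{A,B,C,W\}$. A pigeonhole argument on $M$ shows every ordering hits a bad prefix early: if $\{Z_1,Z_2\}\in M$ we have a bad $2$-prefix; otherwise $Z_1,Z_2$ lie in distinct pairs, and either $Z_3$ is the $M$-partner of $Z_1$ or $Z_2$ (a bad path $3$-prefix) or $\{Z_1,Z_2,Z_3\}$ is a transversal, in which case $\{Z_1,Z_2,Z_3,Z_4\}$ necessarily completes an opposite pair and is a bad $(2,1,1)$ $4$-prefix. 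Thus no ordering can keep all prefixes good.

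It remains to realize these blow-ups by genuine instances, and here is the crux. For each bad prefix I would build a PC-valid instance whose full answer has size $O(N)$ yet whose induced subquery on that prefix has size $\Omega(N^2)$; the key device is to route the joint constraint between the two opposite variables through variables \emph{outside} the prefix, so the prefix cannot ``see'' it. For $\{A,B,C,W\}$, for instance, take $R_1=\{(a_0,b_0,w):w\in[n]\}$, $R_2=\{(a_0,c,v_c)\}$, $R_3=\{(b_0,c,u_c)\}$ with distinct $u_c,v_c$, and $R_4=\{(u_c,v_c,\rho(c)):c\in[n]\}$ for a permutation $\rho$; each relation is a matching ($d=1$), the subquery on $\{A,B,C,W\}$ ranges $C$ and $W$ independently and has size $n^2$, but in the full query $R_4$ forces $w=\rho(c)$, leaving only $n$ answers. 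Analogous constructions handle the diagonal and path prefixes. Finally, to defeat \emph{every} ordering with a single instance -- necessary because a VAAT algorithm may choose its ordering depending on $I$ -- I would take the disjoint union, on disjoint value domains, of one tailored bad instance per ordering, scaled so each relation still has $\le N$ tuples. Because $\hexq$ is connected, a full answer must lie within a single component, so the union's answer is still $O(N)$ and it satisfies $\mathbf{PC}$; yet for every ordering $\tau$ the component tailored to $\tau$ already forces $\max_i|Q_{i}^{I}|=\Omega(N^2)$, so by Definition~\ref{def:vaat} any VAAT algorithm spends $\Omega(N^2)=\omega(N^{7/4})=\omega(\sup_{I\vDash\mathbf{PC}}|\hexq^I|)$ on it. The main obstacle is exactly this step: engineering the \emph{forced} $(2,1,1)$ prefix to explode while the global answer stays linear, and then lifting per-ordering hardness to one universal instance via the disjoint union, with the connectivity of $\hexq$ being what keeps the final answer small.
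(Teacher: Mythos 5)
Your proposal is correct in outline and takes a genuinely different route from the paper. The paper keeps tight degree-one DCs on $R_1,R_2,R_3$ (each pair of variables determines the third) and places a single \GDC on $R_4$ only; this yields the separation $O(n)$ versus $\Omega(n^{4/3})$ for part (1), an explicit three-branch nested-loop algorithm (Algorithm~\ref{alg:hypertriangle}) for part (2), and, for part (3), a single symmetric instance built from seven disjoint components of ``complete-bipartite'' and ``path'' relations with a two-case analysis on whether the last two variables $X_5X_6$ of the ordering co-occur in a relation. You instead put only cardinality constraints in $\mathbf{DC}$ (so the DC bound is the AGM value $\Theta(N^2)$) and a \GDC on every relation, and your part (3) is organized around a clean combinatorial lemma---every ordering hits a prefix containing a pair of your matching $M$ within four steps---followed by per-prefix gadgets glued by disjoint union. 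Your classification of bad prefixes is the complementary view of the paper's case split on $X_5X_6$, your gadgets (degree-one matchable relations, hence \GDC-valid, with the opposite-pair correlation routed through variables outside the prefix) are sound, and the disjoint-union trick to defeat data-dependent orderings is also the paper's device; connectivity of $\hexq$ keeps the full answer linear in both treatments.

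The one genuine gap is in part (2). You run, for each of the $3^4$ part-combinations, a routine meeting the upper bound $O(Nd^{c})$ and conclude the total is $O(\sup_{I\vDash\mathbf{\GDC}}|\hexq^I|)$. That final equality needs a matching lower bound $\sup_{I\vDash\mathbf{\GDC}}|\hexq^I|=\Omega(Nd^{c})$ for the exponent $c$ your algorithm actually incurs, and with $d=N^{1/4}$ this is neither proved nor obvious: a nested-loop join can spend $\Theta(Nd^{2})$ time on partial tuples that never extend to outputs, while it is unclear that any \GDC-satisfying instance realizes $\Omega(Nd^{2})$ answers---the triangle-like structure of $\hexq$ resists such constructions. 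The clean fix is to take $d=O(1)$, as the paper effectively does: then the algorithm runs in $O(N)$, an all-diagonal instance witnesses $\sup_{I\vDash\mathbf{\GDC}}|\hexq^I|=\Omega(N)$, and all three parts go through with the gap $N$ versus $N^{2}$. A smaller point: ``the degree-constraint join routine that meets its $O(Nd^{c})$ bound'' must be exhibited concretely (a WCOJ algorithm relative to $CB_{\text{Comb}}$ for general DCs is open), but the variable orderings implicit in your own counting argument supply one, exactly as in Algorithm~\ref{alg:hypertriangle}. Also, your appeal to Propositions~\ref{thm:pre} and~\ref{thm:post} in part (1) is unnecessary, since the definition of a \GDC already quantifies only over partitions into $|\mathcal{X}|$ parts.
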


\noindent To prove these claims, we consider the hexagon query (also depicted in Figure \ref{fig:hypertriangle})
{
\begin{align*}
    \hexq(A,B,C,U,V,W)\leftarrow R_1(A,W,B)\Join R_2(B,U,C)\Join R_3(C,V,A)\Join R_4(U,V,W)
\end{align*}
}
and impose the following set of \GDCs on the relations:
\begin{alignat*}{10}
&DC_{R_1}(AW,AWB,1), \quad  &&DC_{R_1}(WB,AWB,1), \quad &&CC_{R_1}(n), \quad CC_{R_2}(n) \\ 
&DC_{R_2}(BU,BUC,1), \quad  &&DC_{R_2}(UC,BUC,1), \quad  &&CC_{R_3}(n), \quad  CC_{R_4}(n), \\
&DC_{R_3}(CV,CVA,1), \quad   &&DC_{R_3}(VA,CVA,1),  \quad &&\GDC_{R_4}(\{U,V,W\}, UVW, 1).
\end{alignat*}
We denote the whole set as $\textbf{\GDC}$ and the subset of DCs as \textbf{DC}. 
We now prove the theorem's first claim. 
That is, the combinatorics bound on $\hexq$ is super linear when only considering \textbf{DC}, while there are only a linear number of answers to $\hexq$ over any database satisfying \textbf{\GDC}. We begin by providing a lower bound on the combinatorics bound of $\hexq$.

\begin{restatable}{lemma}{lemCBsuperlinear}
\label{lem:CBsuperlinear}
    The combinatorics bound of $\hexq$ based on $\mathbf{DC}$ is in $\Omega(n^{\frac{4}{3}})$.
\end{restatable}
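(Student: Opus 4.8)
The plan is to lower-bound the combinatorics bound by exhibiting a single explicit family of instances $I_n \vDash \mathbf{DC}$ whose output is large, since $CB_{\text{Comb}}(\hexq,\mathbf{DC}) = \sup_{I \vDash \mathbf{DC}}|\hexq^I|$ is a supremum over exactly such instances. The guiding observation is that the two degree constraints imposed on each of $R_1,R_2,R_3$ — e.g. $DC_{R_1}(AW,AWB,1)$ together with $DC_{R_1}(WB,AWB,1)$ — say precisely that, for every fixed value of $W$, the relation $R_1$ restricted to that $W$ is the graph of a partial bijection between the $A$-values and the $B$-values. So I would build each of $R_1,R_2,R_3$ out of such bijections and arrange the three families so that, for many choices of $(U,V,W)$, their composition around the hexagon is the identity; each such closed composition then contributes an entire orbit of answers rather than a single tuple.

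Concretely, I would set $s=\lfloor\sqrt n\rfloor$, let every attribute range over $\mathbb{Z}_s$ with arithmetic modulo $s$, and take
\[
R_1=\{(a,w,a+w):a,w\in\mathbb{Z}_s\},\quad R_4=\{(u,v,w):u+v+w\equiv 0\},
\]
defining $R_2=\{(b,u,b+u)\}$ and $R_3=\{(c,v,c+v)\}$ analogously to $R_1$. Each relation has exactly $s^2\le n$ tuples, so all four cardinality constraints hold, and the paired degree constraints hold because fixing the two ``determining'' attributes of any $R_i$ pins the third uniquely (e.g. $(a,w)$ forces $b=a+w$, while $(w,b)$ forces $a=b-w$). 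Hence $I_n\vDash\mathbf{DC}$.

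To count the output, observe that an answer $(a,b,c,u,v,w)$ must satisfy $b=a+w$, $c=b+u=a+w+u$, and $a=c+v$; chaining these forces $w+u+v\equiv 0$, which is exactly the membership condition of $R_4$. Thus the $R_4$ constraint is automatically entailed by the first three relations and costs nothing, and the answers are in bijection with the free choices $(a,u,w)\in\mathbb{Z}_s^3$ (then $v=-u-w$ and $b,c$ are determined). This yields $|\hexq^{I_n}|=s^3=\Theta(n^{3/2})$, which is in particular $\Omega(n^{4/3})$ and proves the lemma; in fact it already gives the stronger bound $\Omega(n^{3/2})$.

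The step I expect to require the most care is verifying that the $R_4$ cardinality constraint does not secretly throttle the count: a priori one might fear that forcing $u+v+w\equiv 0$ discards most of the $s^3$ candidate tuples, but the key point is that this congruence is precisely the closure condition produced by composing the three per-value bijections around the cycle, so it is already implied and the full $s^3$ orbit survives. A secondary point worth stating cleanly is the bijective reading of the paired degree constraints on $R_1,R_2,R_3$, since that is simultaneously what makes the DCs hold for free and what guarantees that each valid $(u,v,w)$ contributes a length-$s$ orbit of answers.
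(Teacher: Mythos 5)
Your construction is correct and proves the lemma; in fact it proves something strictly stronger. Both you and the paper build $R_1,R_2,R_3$ as unions of modular-shift bijections (the paper's gadget $R_{X,Y,Z}=\{(i,j,i+j-c \bmod n^{2/3})\}$ is the same device as your $b=a+w$), and in both cases the two unit degree constraints per relation hold because any two attributes determine the third. The difference is in the parameterization and in the role of $R_4$. The paper takes lopsided domains ($|dom(A)|=|dom(B)|=|dom(C)|=n^{2/3}$, $|dom(U)|=|dom(V)|=|dom(W)|=n^{1/3}$) and sets $R_4=dom(U)\times dom(V)\times dom(W)$, a full Cartesian product of size $n$ that joins with everything; the output then comes from counting, for each $a$, the $\Theta(n^{1/3})\times\Theta(n^{1/3})$ block of $(b,c)$ pairs that close the triangle, giving $\Omega(n^{4/3})$. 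You instead take all domains of size $s=\lfloor\sqrt{n}\rfloor$, make each per-value slice a \emph{full} bijection, and choose $R_4$ to be exactly the set of closing triples $u+v+w\equiv 0$, so that every one of the $s^2$ tuples of $R_4$ contributes a full length-$s$ orbit. This yields $s^3=\Theta(n^{3/2})$, which subsumes the claimed $\Omega(n^{4/3})$ and does not contradict anything else in the paper (the lemma is only ever used as a lower bound witnessing super-linearity against the $O(n)$ PC-based bound, and no matching $O(n^{4/3})$ upper bound is asserted). Your key observation --- that the congruence defining $R_4$ is exactly the closure condition of the composed bijections, so the $R_4$ membership test discards nothing --- is the step that buys the improvement, and your verification of the cardinality and degree constraints is complete.
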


\begin{proof}[Proof Sketch.]
    It suffices to provide a collection of databases $\mathcal{I}$ such that $I\vDash \textbf{DC}$ and $|\hexq^I| = \Omega(n^\frac{4}{3})$ for $I\in \mathcal{I}$.
    To accomplish this, we introduce a new relation $R_{X,Y,Z}(X,Y,Z)$ with $|dom(X)| = |dom(Z)| = n^{\frac{2}{3}}$ and $|dom(Y)| = n^{\frac{1}{3}}$.
    Intuitively, think of $R_{X,Y,Z}$ as a bipartite graph from the domain of $X$ to the domain of $Z$ where $Y$ identifies the edge for a given $x\in dom(X)$ or $z\in dom(Z)$.
    Thus, every $x\in dom(X)$ is connected to $n^{\frac{1}{3}}$ neighbors in $dom(Z)$ and, due to symmetry, also the other way around.

    Therefore, the constraints $DC(XY,XYZ,1)$, $DC(YZ, XYZ, 1)$, and $CC(n)$ are satisfied over $R_{X,Y,Z}$.
    Thus, we can use a relation of this type for $R_1, R_2, R_3$.
    Concretely, we use this relation in the following way:
        $R_1 = R_{A,W,B}, 
        R_2 = R_{B,U,C},
        R_3 = R_{C,V,A}.$
    Thus, for each ($n^{\frac{2}{3}}$ many) $a\in dom(A)$ there are $n^{\frac{1}{3}}$ many matching $b\in dom(B)$ and possibly up to $n^{\frac{1}{3}}$ many matching $c\in dom(C)$.
    Thus, in total, there may be up to $n^{\frac{4}{3}}$ answers to $R_1\Join R_2 \Join R_3$.
    To accomplish this also for $\hexq$, we only have to make sure that the variables $U, V, W$ also join in $R_4$.
    For that, we simply set $R_4 = dom(U)\times dom(V)\times dom(W)$.
    This relation then satisfies its cardinality constraint.
    (Note that it does not satisfy its \GDC.)    
\end{proof}

We now provide an algorithm (Algorithm~\ref{alg:hypertriangle}) that enumerates $\hexq$ in linear time for databases with the \GDC on $R_4$. 
This proves that the output size is linear, simultaneously completing our proof of claim 1 and claim 2. 
Algorithm~\ref{alg:hypertriangle} first decomposes $R_4$ into three parts with one DC on each part. 
For this, we apply a linear time greedy partitioning algorithm (for more details see Algorithm~\ref{alg:approxdecomp}) and show that it results in partitions whose constraints are within a factor of 3 from the optimal \textbf{\GDC}. 
Then, for each part, a variable order is selected to take advantage of all DCs. 
As an example, for the part $R_{4}^{W}(U,V,W)$, there are at most 3 matching $(u,v)\in dom(U,V)$ pair for each value of $w\in dom(W)$. 
Thus, starting with a tuple $(a,w,b)$ of $R_1$, we can use this fact to determine these values for $u,v$ and then combine this with $DC_{R_2}(BU, BUC,1)$ to determine the unique value for $c$. 
By this reasoning, all of the inner for-loops iterate over a single element or 3 pairs of elements. 
Thus, the nested loops are linear in their total runtime. 
We defer the formal proof to Appendix~\ref{app:hyprtriangle}.

\begin{algorithm}
    \begin{algorithmic}[1] 
        \State $R_{4}^{U}, R_{4}^{V}, R_{4}^{W} \gets \textbf{decompose}(R_4, \{U,V,W\}, UVW)$       
        \State \Comment{$DC_{R_{4}^{U}}(U,UVW, 3), DC_{R_{4}^{V}}(V,UVW, 3), DC_{R_{4}^{W}}(W,UVW, 3)$}
        \For{$(a,w,b)\in R_1$}
            \For{$(u,v)\in \pi_{U,V}\sigma_{W=w}R_{4}^{W}$}
                \For{$c\in (\pi_{C}\sigma_{B=b \land U=u}R_2 \cap \pi_{C}\sigma_{A=a \land V=v}R_3)$}
                    \State \textbf{output} $(a,b,c,u,v,w)$
                \EndFor
            \EndFor
        \EndFor
        \For{$(b,u,c)\in R_2$}
            \For{$(v,w)\in \pi_{V,W}\sigma_{U=u}R_{4}^{U}$}
                \For{$a\in (\pi_{A}\sigma_{B=b \land W=w}R_1 \cap \pi_{A}\sigma_{C=c \land V=v}R_3)$}
                    \State \textbf{output} $(a,b,c,u,v,w)$
                \EndFor
            \EndFor
        \EndFor
        \For{$(c,v,a)\in R_3$}
            \For{$(u,w)\in \pi_{U,W}\sigma_{V=v}R_{4}^{V}$}
                \For{$b\in (\pi_{B}\sigma_{A=a \land W=w}R_1 \cap \pi_{B}\sigma_{C=c \land U=u}R_2)$}
                    \State \textbf{output} $(a,b,c,u,v,w)$
                \EndFor
            \EndFor
        \EndFor
    \end{algorithmic}
    \caption{Linear Hexagon Algorithm}
    \label{alg:hypertriangle}
\end{algorithm}

\begin{restatable}{lemma}{lemalgohyperlinear}    
\label{lem:algohyperlinear}
    Algorithm \ref{alg:hypertriangle} enumerates $\hexq^I$ in time $O(n)$ for databases $I\vDash\mathbf{PC}$.
\end{restatable}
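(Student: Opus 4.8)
The plan is to establish two things about Algorithm~\ref{alg:hypertriangle}: that it outputs exactly the tuples of $\hexq^I$, each exactly once, and that it runs in time $O(n)$. First I would dispatch the preprocessing. The call to \textbf{decompose} on line~1 invokes the linear-time greedy routine (Algorithm~\ref{alg:approxdecomp}), which runs in $O(|R_4|)=O(n)$ since $CC_{R_4}(n)$ holds, and which returns a genuine partition $R_4 = R_4^U\cupdot R_4^V\cupdot R_4^W$ whose per-part degree constraints are within a factor $|\mathcal{X}|=3$ of the optimum $\GDC_{R_4}(\{U,V,W\},UVW)=1$; hence $DC_{R_4^U}(U,UVW)\le 3$, $DC_{R_4^V}(V,UVW)\le 3$, and $DC_{R_4^W}(W,UVW)\le 3$, exactly as recorded in the comment on line~2. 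I would also build, in $O(n)$ time, hash indices for every lookup the loops perform: on $R_4^W$ keyed by $W$, on $R_4^U$ keyed by $U$, on $R_4^V$ keyed by $V$, and on each $R_i$ keyed by the two-variable selection patterns it is queried with, so that every selection/projection costs $O(1)$ expected time.

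For correctness, take any answer $(a,b,c,u,v,w)\in\hexq^I$, so $(a,w,b)\in R_1$, $(b,u,c)\in R_2$, $(c,v,a)\in R_3$, and $(u,v,w)\in R_4$. Since the three parts partition $R_4$, the tuple $(u,v,w)$ lies in exactly one of them, say $R_4^W$. Then in the first loop, upon reaching $(a,w,b)\in R_1$ and $(u,v)\in\pi_{U,V}\sigma_{W=w}R_4^W$, the algorithm scans $\pi_C\sigma_{B=b\wedge U=u}R_2\cap\pi_C\sigma_{A=a\wedge V=v}R_3$, which contains $c$, so the tuple is output; symmetric reasoning covers $(u,v,w)\in R_4^U$ (second loop) and $(u,v,w)\in R_4^V$ (third loop). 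No tuple is output twice: across loops because the parts are disjoint, and within a loop because $DC_{R_2}(BU,BUC,1)$ forces $\pi_C\sigma_{B=b\wedge U=u}R_2$ to be a singleton, so the innermost loop fires at most once per $(u,v)$ (and analogously for the other two loops). Finally, every output is a valid answer by construction, so the algorithm enumerates exactly $\hexq^I$.

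For the runtime I would bound each of the three nested loops by a product of three factors. The outer loop runs at most $n$ times by $CC_{R_1}(n)$, $CC_{R_2}(n)$, $CC_{R_3}(n)$. The middle loop runs at most $3$ times: in the first loop $|\pi_{U,V}\sigma_{W=w}R_4^W|\le DC_{R_4^W}(W,UVW)\le 3$, and likewise using the $U$- and $V$-indexed parts for the other two loops. The innermost loop runs at most once, as it intersects a set capped by one of the ``both-direction'' unit constraints: $|\pi_C\sigma_{B=b\wedge U=u}R_2|\le DC_{R_2}(BU,BUC,1)=1$ in the first loop, $|\pi_A\sigma_{B=b\wedge W=w}R_1|\le DC_{R_1}(WB,AWB,1)=1$ in the second, and $|\pi_B\sigma_{A=a\wedge W=w}R_1|\le DC_{R_1}(AW,AWB,1)=1$ in the third. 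With $O(1)$ index lookups, each loop therefore costs $O(n\cdot 3\cdot 1)=O(n)$, and adding the $O(n)$ preprocessing yields the claimed $O(n)$ bound (which, incidentally, also re-confirms $|\hexq^I|=O(n)$).

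The main obstacle is the innermost bound: one must match each inner for-loop to the specific unit degree constraint that collapses it to $O(1)$ work, and confirm the intersection of two singletons is computed in constant time. This is precisely where the ``both-direction'' DCs on $R_1,R_2,R_3$ are indispensable, because the variable fixed by the middle loop differs from the one supplied by the outer loop, so a single-direction key would not suffice. A secondary point worth stating carefully is that the factor-$3$ slack from the approximate \textbf{decompose} only inflates the middle loop's constant and is thus harmless; the exact routine (Algorithm~\ref{alg:exactdecomp}) is not needed here.
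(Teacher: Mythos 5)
Your proposal is correct and follows essentially the same route as the paper: decompose $R_4$ greedily into three parts with per-part degree at most $3$ (the factor-$|\mathcal{X}|$ approximation of Algorithm~\ref{alg:approxdecomp}), then bound each triple of nested loops by $n\cdot 3\cdot 1$ using the cardinality constraints, the degree-$3$ bound on the relevant part of $R_4$, and the unit two-variable degree constraints on $R_1,R_2,R_3$, with hashing for constant-time selections. The only cosmetic difference is that the paper's appendix re-derives the greedy partitioning argument inline for this special case, whereas you invoke Theorem~\ref{thm:algoAprox} directly, and you spell out the duplicate-freeness of the enumeration slightly more explicitly; neither changes the substance.
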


Lemma \ref{lem:CBsuperlinear} and Lemma \ref{lem:algohyperlinear} together prove the first two claims of Theorem~\ref{thm:hex_query}. 
This shows that \GDCs have an asymptotic effect on query bounds and that we can take advantage of \GDCs to design new WCOJ algorithms to meet these bounds. Nevertheless, one might wonder whether the variable elimination idea of established WCOJ algorithms can already meet the improved bound and, in fact, achieve optimal runtimes. 
Proving the third claim in Theorem~\ref{thm:hex_query}, we show that they cannot and, thus, the new techniques have to be employed. 
Specifically, we show that VAAT algorithms (Definition~\ref{def:vaat}) require time $\Omega(n^{1.5})$ to compute $\hexq$ on database instances satisfying the PCs.

\begin{restatable}{lemma}{lemvaat}    
\label{lem:vaat}
    VAAT algorithms require time $\Omega(n^{1.5})$ to enumerate $\hexq^I$ for databases $I\vDash\mathbf{PC}$.
\end{restatable}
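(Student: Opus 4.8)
The plan is to prove the bound by attacking the quantity $\max_i |Q_i^I|$ directly: for every variable order $Z_1,\dots,Z_6$ that a VAAT algorithm could use, I would exhibit a database instance $I\vDash\mathbf{\GDC}$ on which some prefix subquery satisfies $|Q_i^I|=\Omega(n^{3/2})$, so that \Cref{def:vaat} forces runtime $\Omega(n^{3/2})$. The query has a dihedral $D_3$ symmetry (the rotation $A\!\to\!B\!\to\!C$, $W\!\to\!U\!\to\!V$ sends $R_1\!\to\!R_2\!\to\!R_3$ and fixes $R_4$, and the reflection $B\!\leftrightarrow\!C$, $W\!\leftrightarrow\!V$ swaps $R_1,R_3$ and fixes $R_2,R_4$), and all these maps preserve the constraint set $\mathbf{\GDC}$. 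Hence it suffices to analyze a constant number of representative orders, classified by the structure of their shortest ``dangerous'' prefix.

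The constructions I would use come in two flavours, both satisfying every DC with bound $1$, every $CC$ with bound $n$, and the PC on $R_4$ with bound $1$. \emph{Instance~A} (dense): all six domains have size $\sqrt n$, and $R_1,R_2,R_3$ are Latin, e.g.\ $R_1=\{(a,w,a{+}w)\}$, $R_2=\{(b,u,b{+}u)\}$, $R_3=\{(c,v,c{+}v)\}$ under modular addition on $[\sqrt n]$, so that \emph{every} binary projection such as $\pi_{AB}R_1$ equals the complete $[\sqrt n]^2$; $R_4$ is the diagonal $\{(t,t,t)\}$, which lies in a single PC part. On Instance~A the corner triangle satisfies $|Q_{\{A,B,C\}}^I|=|[\sqrt n]^3|=n^{3/2}$, and each ``cherry'' such as $Q_{\{A,B,U\}}$ (the join $\pi_{AB}R_1\Join\pi_{BU}R_2$ over $B$, with $\pi_U R_4$ covering $[\sqrt n]$) already has $\sqrt n\cdot\sqrt n\cdot\sqrt n=n^{3/2}$ tuples. \emph{Instance~B}$(X)$ (one enlarged midpoint, tailored to a corner $X$): keep the two midpoints occurring in the outer relations \emph{containing} $X$ at size $\sqrt n$ with the Latin structure, but enlarge the remaining midpoint $M$ (the one in the outer relation \emph{not} containing $X$) to domain size $n$, and set $R_4=R_4^M$ to be $n$ tuples with pairwise distinct $M$-values ranging over all $\sqrt n\times\sqrt n$ combinations of the other two midpoints. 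This respects the PC because all of $R_4$ sits in the single part keyed by $M$, and it yields $|R_4|=\Theta(n)$ while the two relations touching $X$ still have complete binary projections, so $X$ is ``free'' given $U,V,W$ and $|Q_{\{U,V,W,X\}}^I|=|R_4|\cdot\sqrt n=\Theta(n^{3/2})$.

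With these in hand the case analysis is short. If the first three variables of the order form $\{A,B,C\}$ (up to symmetry), Instance~A blows up $Q_3$. If the first two form an antipodal pair such as $\{A,U\}$, then $Q_{\{A,U\}}$ is the Cartesian product $(\pi_A R_1\cap\pi_A R_3)\times(\pi_U R_2\cap\pi_U R_4)$, which I make $\sqrt n\cdot n=n^{3/2}$ using Instance~B. If the first three contain two corners of a common relation together with an external midpoint, that prefix is a cherry and Instance~A applies. The only size-$3$ prefixes that are unavoidably small (bounded by $n$) are the four relation triples $\{U,V,W\},\{A,W,B\},\{B,U,C\},\{C,V,A\}$; for an order whose size-$3$ prefix is one of these I follow it until the first corner $X$ is inserted and invoke Instance~B$(X)$ at that point. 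In particular every midpoint-first order $U,V,W,\dots$ is defeated at its size-$4$ prefix $\{U,V,W,X\}$ by the instance tailored to whichever corner $X$ it adds first.

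The hard part, and the reason a single instance cannot work against all orders at once, is precisely the midpoint-first case. Since $\GDC_{R_4}(\{U,V,W\},UVW,1)$ forces $|R_4|\le |dom(U)|+|dom(V)|+|dom(W)|$, one can only push $|R_4|$ to $\Theta(n)$ by enlarging some midpoint domain, and enlarging a midpoint domain necessarily ``pins'' (destroys the freeness of) the two corners whose defining relations use that midpoint; hence on any one instance at most one corner is free. This is why the instance must be selected as a function of the corner the order inserts first after $U,V,W$, and the main technical point is to verify that such a choice is always available and that the complementary (corner-early and mixed) orders are all absorbed by Instance~A. The $D_3$ symmetry keeps this verification to finitely many cases, and each blow-up size is confirmed by a direct count (or the AGM bound for the triangle and cherry prefixes).
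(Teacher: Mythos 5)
There is a genuine gap, and it is a quantifier inversion. Definition~\ref{def:vaat} explicitly allows a VAAT algorithm to choose its variable ordering \emph{as a function of the instance} $I$. To prove the lemma you therefore need a single instance (for each $n$) on which \emph{every} ordering has some prefix subquery of size $\Omega(n^{1.5})$; it is not enough to show that for every ordering there exists a bad instance, which is what your plan does. Your Instance~B$(X)$ is ``selected as a function of the corner the order inserts first after $U,V,W$'' --- but on Instance~B$(A)$ the algorithm is free to run the order $U,V,W,B,\dots$, whose prefixes all stay $O(n)$ there (each $u$ occurs in only one tuple of $R_2$ once $dom(U)$ is blown up to size $n$, so $Q_{\{U,V,W,B\}}\le|R_4|$). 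So no instance in your family defeats all orders, and the lower bound does not follow. Your supporting claim that ``a single instance cannot work against all orders at once'' is in fact false: the paper takes the disjoint union (over renamed, pairwise disjoint domains) of seven tailored sub-instances --- one per relation and one per antipodal pair such as $AU$ --- and this union still satisfies all the $DC$s and the $\GDC$ with the same constants (degrees do not add across disjoint domains, and the seven cardinalities sum to $\Theta(n)$), while every ordering is defeated by whichever sub-instance was built for it. Your observation that enlarging one midpoint domain ``pins'' two corners only applies within a single block with shared domains; disjointness of the blocks is exactly what circumvents it.

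Beyond that central issue, your ingredients are essentially the paper's (Latin/complete-bipartite-style ternary relations in which any two attributes determine the third, plus a path-like $R_4$), but the paper's case analysis is organized around the \emph{last two} variables $X_5X_6$ (conjoint in a relation vs.\ antipodal), which collapses to exactly two cases and always examines the prefix $Q_4$; your shortest-dangerous-prefix classification leaves cases unaccounted for as written (e.g.\ a prefix $\{A,V,U\}$ where the antipodal pair is not the first two variables, or the order $A,W,B,U,C,V$, whose prefixes $Q_3,Q_4,Q_5$ all stay $O(n)$ on your Instance~A). If you adopt the disjoint-union instance and re-run your analysis keyed on $\{X_5,X_6\}$, the argument closes.
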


\begin{proof}[Proof Sketch.]
    It suffices to provide a collection of databases $\mathcal{I}$ such that $I\vDash \textbf{\GDC}$ and $\max_i|Q_{\varhexagon i}^I| = \Omega(n^{1.5})$ for databases $I\in \mathcal{I}$ and arbitrary ordering of the variables.
    To that end, we introduce two relations, a relation $C_{X,Y,Z}(X,Y,Z)$ and a set of disjoint paths $P_{X,Y,Z}(X,Y,Z)$.
    For $P_{X,Y,Z}(X,Y,Z)$, the domains of $X,Y,Z$ are of size $\Theta(n)$ and $P_{X,Y,Z}$ simply matches $X$ to $Y$ and $Z$ such that each $d\in dom(X)\cup dom(Y)\cup dom(Z)$ appears in exactly one tuple of $P_{X,Y,Z}$.
    On the other hand, think of $C_{X,Y,Z}$ as a complete bipartite graph from the domain of $X$ to the domain of $Y$ and $Z$ uniquely identifies the edges.
    Thus, $|dom(X)|=|dom(Y)| = \Theta(\sqrt{n})$ while $|dom(Z)| = \Theta(n)$.
    Notice that for both relations, $DC(XY,XYZ,1), DC(Z,XYZ,1)$ hold.
    I.e., any pair of variables determine the last variable and there is a variable that determines the whole tuple on its own.

    Consequently, the disjoint union of relations $C$ and $P$ multiple times (with permutated versions of $C$), e.g., 
    $$R(X,Y,Z) = C_{X,Y,Z}(X,Y,Z)\cupdot C_{Y,Z,X}(Y,Z,X) \cupdot P_{X,Y,Z}(X,Y,Z),$$ 
    satisfies $\GDC_{R}(\{X,Y,Z\}, XYZ, 1)$ and $DC(S_1S_2,XYZ,1)$ for any $S_1S_2 \subseteq XYZ$.

    Thus, we can set $R_1,R_2,R_3,R_4$ to be the disjoint union of $P$ and all permutations of the relation $C$.
    Now, let $X_1, \dots, X_6$ be an arbitrary variable order for $\hexq$.
    Then, a VAAT algorithm based on this variable order at least needs to compute the sets:
    \begin{alignat*}{5}
        &\Join_i\pi_{X_1}R_i, \quad&& \Join_i\pi_{X_1X_2}R_i, \quad &&\Join_i\pi_{X_1\cdots X_3}R_i, \\
        & \Join_i\pi_{X_1\cdots X_4}R_i, \quad &&\Join_i\pi_{X_1\cdots X_5}R_i, \quad &&\Join_i\pi_{X_1\cdots X_6}R_i.
    \end{alignat*}
    Let us consider the set $\Join_i\pi_{X_1\cdots X_4}R_i$.
    There are two cases:
    
    \textbf{Case 1:} $X_5$ and $X_6$ appear conjointly in a relation.
    Due to the symmetry of the query and the database, we can assume w.l.o.g, $X_5X_6 = UV$ and $\Join_i\pi_{X_1\cdots X_4}R_i = \Join_i\pi_{ABCW}R_i.$
    Furthermore, $\pi_{ABW}C_{A,B,W}\subseteq \pi_{ABW}R_1$, $\pi_{BC}C_{B,C,U}\subseteq \pi_{BC}R_2$, $ \pi_{AC}C_{A,C,V}\subseteq \pi_{AC}R_3, \pi_{W}P_{U,V,W}\subseteq \pi_{W}R_4$.
    Thus, intuitively, for at least $\Omega(\sqrt{n})$ elements $a\in dom(A)$ there are $\Omega(\sqrt{n})$ elements $b\in dom(B)$ and $\Omega(\sqrt{n})$ elements $c\in dom(C)$ that all join, and for each pair $a,b$ there is an element $w\in dom(W)$ that fits.
    In total, $|\Join_i\pi_{ABCW}R_i| = \Omega(n^{1.5})$.

    \textbf{Case 2:} $X_5$ and $X_6$ do not appear conjointly in a relation.
    Due to the symmetry of the query and the database, we can assume w.l.o.g, $X_5X_6 = AU$ and $\Join_i\pi_{X_1\cdots X_4}R_i $ $= \Join_i\pi_{BCVW}R_i.$
    Furthermore, $\pi_{BW}C_{B,W,A}\subseteq\pi_{BW}R_1$, $\pi_{BC}C_{B,C,U}\subseteq \pi_{BC}R_2$, $ \pi_{CV}C_{C,V,A}\subseteq \pi_{CV}R_3, \pi_{VW}C_{V,W,U}\subseteq \pi_{VW}R_4$.
    Thus, intuitively, for at least $\Omega(\sqrt{n})$ elements $b\in dom(B)$ there are $\Omega(\sqrt{n})$ elements $w\in dom(W)$, $\Omega(\sqrt{n})$ elements $v\in dom(V)$ and,  $\Omega(\sqrt{n})$ elements $c\in dom(C)$ that all join.
    In total, $|\Join_i\pi_{BCVW}R_i| = \Omega(n^{2})$.
\end{proof}

\section{Partition Constraints for General Conjunctive Queries}
\label{sec:general}

In the following, we extend the ideas of Section \ref{sec:hyprtriangle} to an arbitrary full conjunctive queries $Q(\mathbf{Z}) \leftarrow R_1(\mathbf{Z}_1)\Join \ldots\Join R_k(\mathbf{Z}_k)$ and an arbitrary set of partition constraints $\mathbf{\GDC} = \{\GDC_{P_1}(\mathcal{X}_1, {\bf Y}_1, d_1), \dots, \GDC_{P_l}(\mathcal{X}_l, {\bf Y}_l, d_l)\}$.
Recall, Algorithm \ref{alg:hypertriangle} proceeds by decomposing $R_4$ in accordance with the \GDC and then, executes a VAAT style WCOJ algorithm over the decomposed instances. 
We proceed in the same way and start by concentrating on decomposing relations. 
After partitioning the relations, we show how to lift DC-based techniques for bounding and enumerating conjunctive queries to PCs.

\subsection{Computing Constraints and Partitions}

To take advantage of \GDCs, we need to be able to decompose an arbitrary relation $R(\textbf{Z})$ according to a given partition constraint $\GDC(\mathcal{X},\textbf{Y}, d)$. For this task, we propose two poly-time algorithms; a linear approximate algorithm and a quadratic exact algorithm. Crucially, these algorithms do not need $d$ to be computed beforehand, so these algorithms can also be used to compute \GDC constraints themselves, i.e. to compute the value of $\GDC(\mathcal{X},\textbf{Y})$. We start with the faster approximation algorithm before describing the exact algorithm.

Concretely, Algorithm \ref{alg:approxdecomp} partitions a relation $R(\textbf{Z})$ by distributing the tuples from the relation to partitions in a greedy fashion. At each point, it selects the set of variables $\textbf{X}\in \mathcal{X}$ and particular value $\textbf{x}\in \pi_\textbf{X}R$ which occurs in the fewest tuples in the relation $R$. It then adds those tuples to the partition $R^{\mathbf{X}}$ and deletes them from the relation $R$. Intuitively, high degree pair ($\textbf{X}, \textbf{x}$) will be distributed to partitions late in this process. At this point, most of the matching tuples will already have been placed in different partitions.
Formally, we claim the following runtime and approximation guarantee for Algorithm \ref{alg:approxdecomp}.

\begin{algorithm}
    \begin{algorithmic}[1] 
        \State \textbf{decompose}$(R,\mathcal{X},\textbf{Y})$
        \For{$\textbf{X}\in \mathcal{X}$}
            \State $R^{\textbf{X}} \gets \emptyset$
        \EndFor
        \While{$R$ is not empty}
            \State $\textbf{X},\textbf{x} \gets \argmin_{\textbf{X}\in \mathcal{X}, \textbf{x}\in \pi_{\textbf{X}}R}|\pi_\textbf{Y}\sigma_{\textbf{X} = \textbf{x}}R|$ 
            \State $R^\textbf{X} \gets R^\textbf{X} \cup \sigma_{\textbf{X} = \textbf{x}}R$
            \State $R \gets R \setminus \sigma_{\textbf{X} = \textbf{x}}R$
        \EndWhile\\
        \Return $(R^\textbf{X})_{\textbf{X}\in \mathcal{X}}, \max_{\textbf{X}\in \mathcal{X}}DC_{R^\textbf{X}}(\textbf{X},\textbf{Y})$
    \end{algorithmic}
    \caption{Approximate Decomposition Algorithm}
    \label{alg:approxdecomp}
\end{algorithm}

\begin{restatable}{theorem}{thmalgoAprox}
\label{thm:algoAprox}
    For a relation $R(\mathbf{Z})$ and subsets $\mathbf{Y}\subseteq \mathbf{Z}, \mathcal{X}\subseteq 2^{\mathbf{Y}}$, Algorithm \ref{alg:approxdecomp} computes a partitioning $\bigcupdot_{\mathbf{X} \in \mathcal{X}} R^{\mathbf{X}} = R$ in time $O(|R|)$ (data complexity) such that 
        $$DC_{R^\mathbf{X}}(\mathbf{X}, \mathbf{Y}, |\mathcal{X}|d)$$
    holds for every $\mathbf{X}\in \mathcal{X}$ where $d=\GDC(\mathcal{X},\mathbf{Y})$.
\end{restatable}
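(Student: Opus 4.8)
The plan is to treat the two claims separately: the approximation guarantee and the $O(|R|)$ running time. The heart of the argument is a single invariant: \emph{at every iteration of the while-loop, the degree selected by the $\argmin$ is at most $|\mathcal{X}|\cdot d$}, where $d=\GDC(\mathcal{X},\mathbf{Y})$. Granting this, the approximation bound follows quickly. First I would observe that a part $R^{\mathbf{X}}$ only ever receives tuples through selections of the form $(\mathbf{X},\cdot)$, and that selecting $(\mathbf{X},\mathbf{x})$ inserts exactly the current tuples with $\pi_{\mathbf{X}}=\mathbf{x}$ and removes them from $R$ forever. Hence the tuples of $R^{\mathbf{X}}$ with a fixed value $\mathbf{x}$ are precisely those inserted at the unique step selecting $(\mathbf{X},\mathbf{x})$, so
\[
DC_{R^{\mathbf{X}}}(\mathbf{X},\mathbf{Y})=\max_{\mathbf{x}\,:\,(\mathbf{X},\mathbf{x})\text{ selected}}\;|\pi_{\mathbf{Y}}\sigma_{\mathbf{X}=\mathbf{x}}R_{\mathrm{cur}}|,
\]
where $R_{\mathrm{cur}}$ denotes the state of $R$ at that step. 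Each term on the right equals the $\argmin$ value at its step, so the invariant yields $DC_{R^{\mathbf{X}}}(\mathbf{X},\mathbf{Y})\le|\mathcal{X}|\cdot d$ for every $\mathbf{X}$, which is the claim.

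To establish the invariant I would combine an averaging argument on $R_{\mathrm{cur}}$ with monotonicity of $\GDC$ under taking subrelations. Since $R_{\mathrm{cur}}\subseteq R$, restricting an optimal partition of $R$ gives $\GDC_{R_{\mathrm{cur}}}(\mathcal{X},\mathbf{Y})\le d$; let $(S^{\mathbf{X}})_{\mathbf{X}\in\mathcal{X}}$ witness this, so $DC_{S^{\mathbf{X}}}(\mathbf{X},\mathbf{Y})\le d$. Writing $m=|\pi_{\mathbf{Y}}R_{\mathrm{cur}}|$, two facts combine. First, for each fixed $\mathbf{X}$ the degrees $|\pi_{\mathbf{Y}}\sigma_{\mathbf{X}=\mathbf{x}}R_{\mathrm{cur}}|$ over $\mathbf{x}\in\pi_{\mathbf{X}}R_{\mathrm{cur}}$ sum to $m$, so their minimum is at most $m/|\pi_{\mathbf{X}}R_{\mathrm{cur}}|$; hence the global minimum degree is at most $m/\max_{\mathbf{X}}|\pi_{\mathbf{X}}R_{\mathrm{cur}}|$. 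Second, each value of $\mathbf{X}$ inside $S^{\mathbf{X}}$ covers at most $d$ distinct $\mathbf{Y}$-tuples, so $|\pi_{\mathbf{X}}R_{\mathrm{cur}}|\ge|\pi_{\mathbf{X}}S^{\mathbf{X}}|\ge|S^{\mathbf{X}}|/d$; as the $S^{\mathbf{X}}$ partition $R_{\mathrm{cur}}$ the largest has size at least $m/|\mathcal{X}|$, so $\max_{\mathbf{X}}|\pi_{\mathbf{X}}R_{\mathrm{cur}}|\ge m/(|\mathcal{X}|\,d)$. Dividing, the minimum degree is at most $|\mathcal{X}|\cdot d$. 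A minor piece of bookkeeping is that degrees depend only on the $\mathbf{Y}$-projection, so I would phrase the argument on $\pi_{\mathbf{Y}}R_{\mathrm{cur}}$ throughout; tuples sharing a $\mathbf{Y}$-value agree on every $\mathbf{X}\in\mathcal{X}$ and are thus always assigned together, which makes this reduction clean.

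For the running time I would give a bucket-queue implementation in the spirit of the linear-time degeneracy-ordering algorithm. Treating each pair $(\mathbf{X},\mathbf{x})$ as an item keyed by its current degree, I would (i) compute all initial degrees by a group-by, costing $O(|\mathcal{X}|\,|R|)$ with hashing, and (ii) store items in an array of buckets indexed by degree while maintaining incidence lists recording, for each $\mathbf{Y}$-tuple, the $|\mathcal{X}|$ pairs containing it. Selecting the $\argmin$ reduces to advancing a pointer to the least nonempty bucket; when $(\mathbf{X},\mathbf{x})$ is selected and its tuples deleted, each deleted $\mathbf{Y}$-tuple decrements the degree of the $|\mathcal{X}|-1$ other incident pairs, each update moving one item down one bucket in $O(1)$. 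Every $\mathbf{Y}$-tuple is deleted exactly once and triggers $O(|\mathcal{X}|)$ decrements, bounding the update cost by $O(|\mathcal{X}|\,|R|)$, and an amortized analysis of the bucket pointer (as in Matula--Beck) bounds the scanning cost by $O(|R|)$. Since $|\mathcal{X}|$ is constant in data complexity, the total is $O(|R|)$.

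I expect the approximation invariant to be the main conceptual obstacle — in particular checking that the averaging bound $m/\max_{\mathbf{X}}|\pi_{\mathbf{X}}R_{\mathrm{cur}}|$ really meets the covering bound $m/(|\mathcal{X}|\,d)$, and that monotonicity of $\GDC$ supplies the witnessing partition at each step. The running-time claim is essentially careful data-structure bookkeeping. The one subtlety to get right is the interaction between full tuples and their $\mathbf{Y}$-projections, which I would dispatch once at the outset by reducing to $\pi_{\mathbf{Y}}R$.
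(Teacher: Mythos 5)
Your proposal is correct and follows essentially the same route as the paper's proof: the approximation guarantee rests on the observation that, because $R_{\mathrm{cur}}\subseteq R$ still admits an optimal partition witnessing $\GDC(\mathcal{X},\mathbf{Y})\le d$, a double-counting/averaging argument forces the minimum degree at every step to be at most $|\mathcal{X}|d$, and the runtime rests on a bucket-queue with constant-time decrements. The only cosmetic differences are that you state the step-invariant directly (via $\max_{\mathbf{X}}|\pi_{\mathbf{X}}R_{\mathrm{cur}}|$ and pigeonhole) where the paper argues by contradiction at the first violating insertion (summing over all $\mathbf{X}\in\mathcal{X}$), and you correctly flag the reduction to $\pi_{\mathbf{Y}}R$ needed to make the counting inequalities literal.
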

\begin{proof}[Proof Sketch.]
    We start by providing intuition for the linear runtime. Each iteration of the while loop (line 4) places a set of tuples in a partition (line 6) and removes them from the original relation (line 7). Both of these operations can be done in constant time per tuple, so we simply need to show that we can identify the lowest degree attribute/value pair in constant time each iteration. This is done by creating a priority queue structure for each $\mathbf{X}\in\mathcal{X}$ where priority is equal to degree, and we begin by adding each tuple in $R$ to each priority queue. Because the maximum degree is less than $|R|$, we can construct these queues in linear time using bucket sort. We will then decrement these queues by 1 each time a tuple is removed from $R$. While arbitrarily changing an element's priority typically requires $O(log(|R|))$ in a priority queue, we are merely decrementing by 1 which is a local operation that can be done in constant time. So, construction and maintenance of these structures is linear w.r.t.\ data size. We can then use these queues to look up the lowest degree attribute/value pair in constant time.

    Next, we prove the approximation guarantee by contradiction. If the algorithm produces a partition $R^{\mathbf{X}}$ with $DC_{R^{\mathbf{X}}}(\mathbf{X},\mathbf{Y}) > |\mathcal{X}|d$, then there must be some value $\mathbf{x}\in\pi_\mathbf{X}R^{\mathbf{X}}$ where $|\pi_{\mathbf{Y}}\sigma_{\mathbf{X}=\mathbf{x}}R^{\mathbf{X}}| > |\mathcal{X}|d$. At the moment before this value was inserted into $R^{\mathbf{X}}$ and deleted from $R$, all attribute/value pairs must have had degree at least $|\mathcal{X}|d$ in the current state of $R$ which we denote $R_{\mathcal{A}}$. Through some algebraic manipulation, we show that this implies $|R_{\mathcal{A}}| > \sum_{\mathbf{X}\in\mathcal{X}}|\pi_{\mathbf{X}}R_{\mathcal{A}}|d$. On the other hand, we know that $R_{\mathcal{A}}$ respects the original partition constraint because $R_{\mathcal{A}}\subseteq R$, and we show that this implies the converse $ |R_{\mathcal{A}}| \leq \sum_{\mathbf{X}\in\mathcal{X}}|\pi_{\mathbf{X}}R_{\mathcal{A}}|d$.
    This is a contradiction, so our algorithm must not produce a poor approximation in the first place.
\end{proof}

Next, we describe an exact algorithm that requires quadratic time. Intuitively, Algorithm~\ref{alg:exactdecomp} also computes a decomposition of $R$ in a greedy fashion by iteratively allocating  (groups of) tuples $\textbf{y}_0$ of $\pi_Y R$ to partitions $R^\textbf{X}$, preferring allocations to partitions where the maximum over the relevant degree constraints, i.e., $\max_\textbf{X} DC_{R^\textbf{X}}(\textbf{X},\textbf{Y})$, does not increase.  However, decisions greedily made at the start may be sub-optimal and may not lead to a decomposition that minimizes $\max_\textbf{X} DC_{R^\textbf{X}}(\textbf{X},\textbf{Y})$. To overcome this, instead of simply allocating $\textbf{y}_0$ to a partition, the algorithm also checks whether it is possible to achieve a better overall decomposition by reallocating some other tuples in a cascading manner. To that end, we look for elements $\textbf{y}_1 \in \pi_Y R^{\textbf{X}_1} ,\dots,\textbf{y}_m\in \pi_Y R^{\textbf{X}_m}$ and a further $\textbf{X}_{m+1}$ such that for all $\textbf{y}_i, i=1,\dots,m$, the tuples matching $\textbf{y}_i$ can be moved from $R^{\textbf{X}_i}$ to $R^{\textbf{X}_{i+1}}$ and the tuples matching $\textbf{y}_0$ can be added to $R^{\textbf{X}_1}$, all without increasing $\max_\textbf{X} DC_{R^\textbf{X}}(\textbf{X},\textbf{Y})$. To achieve this, $\textbf{y}_{i}$ is selected such that it matches $\textbf{y}_{i-1}$ on the variables $\textbf{X}_{i}$. Thus, for each $R^{\textbf{X}_{i}}$, the value of $DC_{R^{\textbf{X}_{i}}}(\textbf{X},\textbf{Y})$ is the same before and after the update. There only has to be space for $\textbf{y}_m$ in the final relation $R^{\textbf{X}_{m+1}}$.

The sequence $(\textbf{y}_1,\dots,\textbf{y}_m,\textbf{X}_1,\dots,\textbf{X}_{m+1})$ constitutes an augmenting path which was first introduced by \cite{edmonds1965minimum} and used for matroids.
Adapted to the present setting, we define an augmenting path as below.
By slight abuse of notation, we write $\sigma_{\textbf{X}=\textbf{y}}R$ even though $\textbf{y}$ fixes more variables than specified in $\textbf{X}$.
Naturally, this is meant to select those tuples of $R$ that agree with $\textbf{y}$ on the variables $\textbf{X}$.

\begin{definition}
\label{def:augmenting}
    Let $(R^\mathbf{X})_{\mathbf{X}\in \mathcal{X}}$ be pairwise disjoint subsets of some relation $R(\mathbf{Z})$ with  $\mathbf{Y}\subseteq \mathbf{Z}, \mathcal{X} \subseteq 2^{\mathbf{Y}}$, and let $\mathbf{y}_0 \in \pi_{\mathbf{Y}}R\setminus \pi_\mathbf{Y}\bigcupdot_{\mathbf{X}\in\mathcal{X}} R^\mathbf{X}$ be a new tuple.
    An augmenting path $(\mathbf{y}_1,\dots,\mathbf{y}_m,\mathbf{X}_1,\dots,\mathbf{X}_{m+1})$ satisfies the following properties:
\begin{enumerate}
    \item For all $i\in \{1,\dots, m+1\}: \mathbf{X}_i\in \mathcal{X}$.
    \item For all $i\in \{1,\dots, m\}:\mathbf{y}_i \in \pi_\mathbf{Y}R^{\mathbf{X}_i}$.
    \item For all $i\in \{1,\dots, m\}:\mathbf{y}_i $ agrees with $\mathbf{y}_{i-1}$ on $\mathbf{X}_{i}$.
    \item \label{prop:dcleq} $|\pi_\mathbf{Y}\sigma_{\mathbf{X}_{m+1} = \mathbf{y}_m}R^{\mathbf{X}_{m+1}}| < \max_\mathbf{X} DC_{R^\mathbf{X}}(\mathbf{X},\mathbf{Y})$.
\end{enumerate}
We omit the references to $(R^\mathbf{X})_{\mathbf{X}\in \mathcal{X}}$, $\mathbf{Y}$, and $\mathbf{y}_0$ when they are clear from the context.
\end{definition}

The next theorem shows that Algorithm \ref{alg:exactdecomp} correctly computes an optimal decomposition.

\begin{algorithm}
    \begin{algorithmic}[1] 
        \State \textbf{decompose}$(R,\mathcal{X},\textbf{Y})$
        \State $d \gets 0$
        \For{$\textbf{X}\in \mathcal{X}$}
            \State $R^{\textbf{X}} \gets \emptyset$
        \EndFor
        \For{$\textbf{y}_0 \in \pi_\textbf{Y}R$}
            \If{there exists a shortest augmenting path $(\textbf{y}_1,\dots,\textbf{y}_m,\textbf{X}_1,\dots,\textbf{X}_{m+1})$ }
                \For{$i=m,\dots,1$}
                    \State $R^{\textbf{X}_{i+1}} \gets R^{\textbf{X}_{i+1}}\cup \sigma_{\textbf{Y}=\textbf{y}_i}R^{\textbf{X}_i}$
                    \State $R^{\textbf{X}_{i}} \gets R^{\textbf{X}_{i}}\setminus \sigma_{\textbf{Y}=\textbf{y}_i}R^{\textbf{X}_i}$
                \EndFor
                \State $R^{\textbf{X}_1} \gets R^{\textbf{X}_1} \cup \sigma_{\textbf{Y} = \textbf{y}_0}R$
            \Else
                \State $d\gets d+1$
                \State $R^\textbf{X} \gets R^\textbf{X} \cup \sigma_{\textbf{Y} = \textbf{y}_0}R$ \Comment{$\textbf{X}\in \mathcal{X}$ can be selected arbitrarily here.}
            \EndIf
                \State $R \gets R \setminus \sigma_{\textbf{Y} = \textbf{y}_0}R$
        \EndFor\\
        \Return $(R^\textbf{X})_{\textbf{X}\in \mathcal{X}}, d$
    \end{algorithmic}
    \caption{Exact Decomposition Algorithm}
    \label{alg:exactdecomp}
\end{algorithm}

\begin{restatable}{theorem}{thmalgoExact}
\label{thm:algoExact}
    For a relation $R(\mathbf{Z})$ and subsets $\mathbf{Y}\subseteq \mathbf{Z}, \mathcal{X}\subseteq 2^{\mathbf{Y}}$, Algorithm \ref{alg:exactdecomp} computes a partitioning $\bigcupdot_{\mathbf{X} \in \mathcal{X}} R^{\mathbf{X}} = R$ in $O(|R|^2)$ time (data complexity) such that 
        $$DC_{R^\mathbf{X}}(\mathbf{X}, \mathbf{Y}, d)$$
    holds for every $\mathbf{X}\in \mathcal{X}$ where $d=\GDC(\mathcal{X},\mathbf{Y})$.
\end{restatable}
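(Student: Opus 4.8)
The plan is to prove, by induction on the outer loop, the invariant that if $R_{\mathrm{done}}$ denotes the set of tuples already processed, then the variable $d$ always equals $\GDC_{R_{\mathrm{done}}}(\mathcal{X},\mathbf{Y})$ and the current parts witness it, i.e.\ $\max_{\mathbf{X}} DC_{R^\mathbf{X}}(\mathbf{X},\mathbf{Y}) = d$. Taking $R_{\mathrm{done}}=R$ at termination then yields both conclusions of the theorem. Before the induction I would record a reduction: since $\mathbf{X}\subseteq\mathbf{Y}$, all tuples sharing a $\mathbf{Y}$-value have the same projection onto any $\mathbf{X}$, so splitting a $\mathbf{Y}$-group across parts only creates more occupied $(\mathbf{X},\mathbf{x})$-buckets, never fewer. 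Hence it is without loss of generality to treat each $\mathbf{y}_0\in\pi_\mathbf{Y}R$ as a single indivisible item and to compare only against partitions that keep $\mathbf{Y}$-groups together, exactly as the algorithm does.

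For the inductive step I would split on the two branches. When an augmenting path is found, I claim the degrees are unchanged: a telescoping computation shows the net effect of the cascade is a single $+1$ in the bucket $(\mathbf{X}_{m+1},\pi_{\mathbf{X}_{m+1}}\mathbf{y}_m)$ of $R^{\mathbf{X}_{m+1}}$, every other bucket count being preserved. Indeed, for $i\geq 1$ the tuple $\mathbf{y}_i$ enters $R^{\mathbf{X}_{i+1}}$ while $\mathbf{y}_{i+1}$ (or $\mathbf{y}_0$) leaves the \emph{same} bucket, because property~3 of Definition~\ref{def:augmenting} forces $\pi_{\mathbf{X}_{i+1}}\mathbf{y}_i=\pi_{\mathbf{X}_{i+1}}\mathbf{y}_{i+1}$ (and $\pi_{\mathbf{X}_1}\mathbf{y}_0=\pi_{\mathbf{X}_1}\mathbf{y}_1$); all such $\pm 1$ contributions cancel, an argument robust to repeated $\mathbf{X}_i$. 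Property~4 places the surviving $+1$ in a bucket of size $<d$, so the maximum is preserved, and monotonicity of $\GDC$ under adding tuples gives $\GDC_{R_{\mathrm{done}}\cup\{\mathbf{y}_0\}}=d$. I would also note that a shortest (BFS) path has pairwise distinct $\mathbf{y}_i$, which is what makes the sequential cascade well defined.

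The crux, and the main obstacle, is the other branch: when no augmenting path exists I must certify that $d$ genuinely had to grow, i.e.\ $\GDC_{R_{\mathrm{done}}\cup\{\mathbf{y}_0\}}\geq d+1$; the matching upper bound is trivial since placing $\mathbf{y}_0$ raises one bucket by one. For the lower bound I would construct an Edmonds-style certificate by a reachability closure from $\mathbf{y}_0$: let $\mathcal{B}$ be the smallest set of buckets containing $(\mathbf{X},\pi_\mathbf{X}\mathbf{y}_0)$ for every $\mathbf{X}\in\mathcal{X}$ and closed under the rule that if $(\mathbf{X},\mathbf{v})\in\mathcal{B}$ and a tuple $\mathbf{y}$ currently occupies that bucket, then $(\mathbf{X}',\pi_{\mathbf{X}'}\mathbf{y})\in\mathcal{B}$ for all $\mathbf{X}'\in\mathcal{X}$. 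Every bucket of $\mathcal{B}$ is reached by a prefix of a potential augmenting path, so by the failure of property~4 everywhere, each has size exactly $d$ (hence is nonempty, using $d\geq 1$). Letting $W$ be $\{\mathbf{y}_0\}$ together with all tuples occupying a bucket of $\mathcal{B}$, the closure rule forces every $\mathbf{y}\in W$ to fall into a bucket of $\mathcal{B}$ under \emph{any} assignment, while the current state has exactly $d\cdot|\mathcal{B}|$ placed tuples in these buckets, giving $|W|=d\cdot|\mathcal{B}|+1$. Since any partition with $\max DC\leq d$ could host at most $d\cdot|\mathcal{B}|$ distinct $\mathbf{Y}$-values in $\mathcal{B}$, pigeonhole forces some bucket above $d$, a contradiction; pinning down the closure and the count $|W|=d\cdot|\mathcal{B}|+1$ precisely is the delicate part.

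For the running time I would bound the outer loop by $|\pi_\mathbf{Y}R|\leq |R|$ iterations. Finding a shortest augmenting path is a BFS over the graph whose nodes are the current tuples and buckets (at most $O(|R|)$ of them, treating the arity and $|\mathcal{X}|$ as constants for data complexity), costing $O(|R|)$ per iteration, and executing the at most $m\leq|R|$ reassignments costs $O(|R|)$ as well. Summing over the $O(|R|)$ iterations yields the claimed $O(|R|^2)$ data complexity.
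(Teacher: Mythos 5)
Your proof is correct, and while it shares the paper's overall skeleton (loop invariant that the current parts optimally decompose the processed tuples, telescoping cancellation along the augmenting path with Property~4 absorbing the single surviving $+1$, and a BFS giving linear time per $\mathbf{y}_0$ for the $O(|R|^2)$ total), it handles the crux step in a genuinely different way. The paper argues the contrapositive: assuming the \GDC does \emph{not} increase, it takes an optimal partition of $R_{\text{dec}}\cup\sigma_{\mathbf{Y}=\mathbf{y}_0}R$ and constructively extracts an augmenting path by repeatedly locating a ``misplaced'' tuple that sits in its current bucket but not in its optimal one, terminating because the $\mathbf{y}_j$ are pairwise distinct. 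You instead argue the direct implication: from the failure of every path prefix you build the reachability closure $\mathcal{B}$ of buckets, observe that each such bucket is saturated at exactly $d$, and derive the count $|W|=d\cdot|\mathcal{B}|+1$ of $\mathbf{Y}$-groups that any assignment must confine to $\mathcal{B}$, so pigeonhole forces $\GDC\geq d+1$. This is the Hall/Edmonds-style deficiency certificate; it buys you an explicit witness of optimality at every increment of $d$ (a min--max characterization of $\GDC$ as a byproduct) and avoids reasoning about a hypothetical optimal partition, at the cost of having to verify the closure and counting claims carefully. Your preliminary reduction to indivisible $\mathbf{Y}$-groups is a point the paper leaves implicit, and it is needed for the pigeonhole step (or one can note that splitting a group only increases bucket incidences); your observation that a shortest path has distinct $\mathbf{y}_i$, making the cascade well defined, likewise makes explicit why the algorithm asks for a \emph{shortest} augmenting path. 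Both arguments are sound; yours is arguably the cleaner certificate of the lower bound, the paper's is more directly tied to how the path is actually found.
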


\begin{proof}[Proof Sketch.]
The intuition for the algorithm's quadratic runtime boils down to ensuring that the search for an augmenting path is linear in $|R|$. To show that this is the case, we model the search for an augmenting path as a breadth-first search over a bipartite graph whose nodes correspond to full tuples $\mathbf{y}\in\pi_{\mathbf{Y}}R$ and the partial tuples $\mathbf{x}\in \pi_{\mathbf{X}}R$ for all $\mathbf{X}\in\mathcal{X}$. An edge exists between a tuple $\mathbf{y}$ and a partial tuple $\mathbf{x}$ if they agree on their shared attributes. This search starts from the new tuple $\mathbf{y}_0$ and completes when it finds a tuple $\mathbf{y}_{m}$ that can be placed in one of the relations $R^{\mathbf{X}}$ without increasing its DC. The number of edges and vertices in this graph is linear in the input data, so the breadth-first search is linear as well.

We now provide intuition for the algorithm's correctness. Suppose that we are partway through the algorithm and are now at the iteration where we add $\mathbf{y}_0$ to a partition. Further, let $R_{\mathcal{A}}$ be the set of tuples which have been added so far, including $\mathbf{y}_0$. Because there exists an optimal partitioning of $R_{\mathcal{A}}$, we should be able to place $\mathbf{y}_0$ in the partition where it exists in the optimal partitioning. If we cannot, then a tuple in that partition with the same $\textbf{X}$-value must not be in its optimal partition. We identify one of these tuples and move it to its optimal partition. We continue this process inductively of shifting one tuple at a time to its optimal partition until one of these shifts no longer violates the $PC$. The sequence of shifts that we have performed constitutes an augmenting path by definition, and its existence implies that we would not violate the $PC$ in this iteration by incrementing $d$ past it. This construction process must end in a finite number of moves because each step increases the number of tuples placed in their optimal partitioning. If all tuples are in their optimal partition, the final shift must not have violated the $PC$ due to the optimal partition's definition.
\end{proof}

\subsection{Lifting DC-Based Bounds and Algorithms to PCs}

We now apply the developed decomposition algorithms to the general case and show how to use WCOJ algorithms as a black box to carry over guarantees for DCs to the general case of \GDCs.
First, we extend the definition of an arbitrary cardinality bound $CB$ defined over sets of DCs to sets of PCs. 
\begin{definition}
    \label{def:GCB}
    Let $CB$ be a cardinality bound. Then, we extend $CB$ to \GDCs by setting
    \begin{align*} 
    CB(Q,\mathbf{\GDC}) &:= \sum_{\mathbf{X}^1\in  \mathcal{X}_1,\cdots, \mathbf{X}^l\in  \mathcal{X}_l}
    CB(Q,\{DC_{P_1}(\mathbf{X}^1, {\bf Y}_1, d_1), \dots, DC_{P_l}(\mathbf{X}^l, {\bf Y}_l, d_l)\}), 
    \end{align*}
    where $\mathbf{\GDC} = \{\GDC_{P_i}(\mathcal{X}_i, {\bf Y}_i, d_i)\mid i\}$ is an arbitrary set of partition constraints.
\end{definition}

Note that the bound in Definition \ref{def:GCB} is well-defined: If $\mathbf{\GDC}$ is simply a set of DCs, the extended version of the cardinality bound $CB$ coincides with its original definition. If $\mathbf{\GDC}$ is an arbitrary set of $\GDCs$, then it remains a valid bound on the size of the join.

\begin{restatable}{proposition}{propgdc}
    \label{prop:gcb}
    Let $CB(Q,\mathbf{\GDC})$ be an extended cardinality bound. 
    Then,
    $$|Q^I| \leq CB(Q,\mathbf{\GDC}) \quad \forall I\vDash \mathbf{\GDC}.$$
\end{restatable}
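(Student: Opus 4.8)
The plan is to reduce the bound on $|Q^I|$ for a PC-satisfying instance to a sum of DC-based bounds, one for each way of choosing a partition class for each relation. First I would fix an arbitrary instance $I \vDash \mathbf{\GDC}$. For each partition constraint $\GDC_{P_i}(\mathcal{X}_i, \mathbf{Y}_i, d_i)$, the fact that $I$ satisfies it means there is a witnessing partition of the relation $P_i$ into subrelations $(P_i^{\mathbf{X}})_{\mathbf{X}\in\mathcal{X}_i}$ such that $DC_{P_i^{\mathbf{X}}}(\mathbf{X}, \mathbf{Y}_i) \le d_i$ for every $\mathbf{X} \in \mathcal{X}_i$. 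I would fix one such witnessing partition for each $i$ at the outset.

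The key step is to expand the join $Q^I$ over all combinations of partition classes. Since each $P_i = \bigcupdot_{\mathbf{X}\in\mathcal{X}_i} P_i^{\mathbf{X}}$ is a disjoint union, the join distributes over these unions: every output tuple of $Q^I$ uses exactly one subrelation $P_i^{\mathbf{X}^i}$ for each $i$, for some choice $\mathbf{X}^i \in \mathcal{X}_i$. Hence
\begin{align*}
    Q^I = \bigcupdot_{\mathbf{X}^1\in\mathcal{X}_1,\dots,\mathbf{X}^l\in\mathcal{X}_l} \big(P_1^{\mathbf{X}^1} \Join \dots \Join P_l^{\mathbf{X}^l}\big),
\end{align*}
where the union on the right is disjoint because the choices partition the relations. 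Taking cardinalities, $|Q^I|$ equals the sum over all such tuples $(\mathbf{X}^1,\dots,\mathbf{X}^l)$ of the size of the corresponding sub-join.

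For each fixed choice $(\mathbf{X}^1,\dots,\mathbf{X}^l)$, the sub-join is an instance of the same query $Q$ on the subrelations $P_i^{\mathbf{X}^i}$, and by construction these subrelations satisfy the degree constraints $DC_{P_i}(\mathbf{X}^i, \mathbf{Y}_i, d_i)$. By Definition~\ref{def:cardinality-bound}, the size of this sub-join is at most $CB(Q, \{DC_{P_1}(\mathbf{X}^1,\mathbf{Y}_1,d_1),\dots,DC_{P_l}(\mathbf{X}^l,\mathbf{Y}_l,d_l)\})$. Summing this inequality over all combinations of partition classes yields exactly the right-hand side of Definition~\ref{def:GCB}, completing the proof. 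The main subtlety to verify carefully is that the distribution of the join over the disjoint unions is genuinely a \emph{disjoint} union on the output side — two distinct choices of partition classes cannot produce the same output tuple, since the class of each relation used is determined by which subrelation the contributing input tuple lives in. Once that is checked, the rest is bookkeeping: matching the summation index and recognizing each summand as an application of the original DC-based bound to a valid sub-instance.
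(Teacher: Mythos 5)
Your proposal is correct and follows essentially the same route as the paper: fix witnessing partitions, distribute the join over the disjoint unions, observe that each output tuple selects exactly one class per constraint (so the decomposition of $Q^I$ is disjoint), bound each piece by $CB(Q,\cdot)$ applied to the induced degree constraints, and sum. The one bookkeeping detail the paper handles explicitly and you elide is that a single relation may carry several \GDCs (or none): the paper therefore builds each sub-instance by intersecting the chosen classes per relation, $R_{s}^{j_1\dots j_l} = \bigcap_{i:s=s(i)}P_{i}^{j_i}$, and leaves unconstrained relations intact, so that every summand is literally an instance of the same query $Q$ to which the DC-based bound applies; your formula $P_1^{\mathbf{X}^1}\Join\dots\Join P_l^{\mathbf{X}^l}$ only matches $Q$ when the \GDCs are in bijection with the atoms.
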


\begin{proof}
    Let $I$ be a database satisfying $\textbf{\GDC} = \{\GDC_{P_i}(\mathcal{X}_i, {\bf Y}_i, d_i)\mid i=1,\dots,l\}$ and $Q\leftarrow R_1(\textbf{Z}_1)\Join \dots \Join R_k(\textbf{Z}_k)$.    
    Thus, for each $P_i$ there is a partitioning $P_i= \bigcupdot_{j=1}^{|\mathcal{X}_i|}P_{i}^{j}$ with $DC_{P_{i}^{j}}(\textbf{X}_{i}^{j}, \textbf{Y}_i, d_i)$ and $\mathcal{X}_i = \{X_{i}^{j}| j=1,\dots, |\mathcal{X}_i|\}$.
    Now, let us now fix some $j_1 \in \{1,\dots,$ $ |\mathcal{X}_1|\}, \dots, $ $ j_l \in \{1,\dots, |\mathcal{X}_l|\}$.
    Furthermore, for each $i\in \{1,\dots, l\}$ let $s(i)\in \{1,\dots, k\}$ be the relation $R_{s(i)}=P_i$.
    Set $R_{s}^{j_1\dots j_l} = \bigcap_{i:s=s(i)}P_{i}^{j_i}$ for all $s\in \{1,\dots, k\}$.
    Then consider $Q^{j_1\dots j_l} = R_{1}^{j_1\dots j_l}\Join \dots \Join R_{k}^{j_1\dots j_l}$.
    We claim two things:
    \begin{enumerate}
        \item $Q^{j_1\dots j_l}$ partitions $Q^I$, i.e., $Q^I = \bigcup_{j_1=1}^{|\mathcal{X}_1|}\dots \bigcup_{j_l=1}^{|\mathcal{X}_l|}Q^{j_1\dots j_l}$, and
        \item $|Q^{j_1\dots j_l}| \leq CB(Q,\{DC_{P_1}(\textbf{X}_{1}^{j_1}, {\bf Y}_1, d_1), \dots, DC_{P_l}(\textbf{X}_{l}^{j_l}, {\bf Y}_l, d_l)\})$.
    \end{enumerate}
    For the first bullet point, let $t\in Q^I$.
    Clearly, for each $i\in \{1,\dots, l\}$ there exists exactly one $j_i$ such that $t$ agrees with an element in $P_{i}^{j_i}(\textbf{Z}_{s(i)})$ on the variables $\textbf{Z}_{s(i)}$.
    Thus, $t\in Q^{j_1\dots j_l}$.

    For the second bullet point, consider the relations $P_{i}^{j_i}$.
    These are supersets of the relations $R_{s(i)}^{j_1\dots j_l}$ and, therefore, we can assert $DC_{R_{s(i)}^{j_1\dots j_l}}(\textbf{X}_{i}^{j_i}, \textbf{Y}_i, d_i)$.
    Viewed as a query, $Q^{j_1\dots j_l}$ has the same form as $Q$.
    Hence, $|Q^{j_1\dots j_l}| \leq CB(Q,\{DC_{P_1}(\textbf{X}_{1}^{j_1}, {\bf Y}_1, d_1), \dots, DC_{P_l}(\textbf{X}_{l}^{j_l}, {\bf Y}_l, d_l)\})$.
\end{proof}

Crucially, this upper bound is not loose; it preserves the tightness of any underlying DC-based bound. Specifically, the extended version of the combinatorics bound $CB_{Comb}$ is asymptotically close to the actual worst-case size of the join, with the constant depending on the query. For example, the extended version of combinatorics bound is $O(n)$ for the query $\hexq$ (see Section \ref{sec:hyprtriangle}) when using all \GDCs while the combinatorics bound based on the DCs alone was $\Omega(n^{\frac{4}{3}})$.


\begin{restatable}{proposition}{propcomb}
\label{prop:combPC}
    Let $CB_{Comb}(Q,\mathbf{\GDC})$ be the extended version of the combinatorics bound $CB_{Comb}$. 
    Then,
    $$CB_{Comb}(Q,\mathbf{\GDC}) = O(\max_{I\vDash \mathbf{\GDC}}|Q^I|).$$
\end{restatable}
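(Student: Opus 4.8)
The plan is to unfold the definition of the extended combinatorics bound and reduce the claim to a statement about a single summand, which then follows from a containment of feasibility sets. Using Definition~\ref{def:GCB}, I would first write
$CB_{Comb}(Q,\mathbf{\GDC})$ as a sum over all choices $(\mathbf{X}^1,\dots,\mathbf{X}^l)\in \mathcal{X}_1\times\cdots\times\mathcal{X}_l$ of the plain combinatorics bound $CB_{Comb}(Q,\mathbf{DC}^*)$, where $\mathbf{DC}^* = \{DC_{P_i}(\mathbf{X}^i,\mathbf{Y}_i,d_i)\mid i=1,\dots,l\}$ is the DC selection induced by that choice. Since $Q$ is fixed, the number of summands is $\prod_{i=1}^l |\mathcal{X}_i|$, a constant depending only on $Q$. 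Hence it suffices to bound each summand individually by $O(\max_{I\vDash\mathbf{\GDC}}|Q^I|)$; summing over the constant number of terms then yields the claim, with the hidden constant absorbing the factor $\prod_i|\mathcal{X}_i|$.

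Second, I would reduce each summand to the worst-case PC size. By definition $CB_{Comb}(Q,\mathbf{DC}^*) = \sup_{I\vDash\mathbf{DC}^*}|Q^I|$, so the task is to show $\{I : I\vDash\mathbf{DC}^*\}\subseteq\{I : I\vDash\mathbf{\GDC}\}$. The crucial observation is that a degree constraint on a whole relation is a degenerate special case of a partition constraint: if $P_i$ satisfies $DC_{P_i}(\mathbf{X}^i,\mathbf{Y}_i,d_i)$, then placing all tuples of $P_i$ into the single part indexed by $\mathbf{X}^i$ and leaving every other part empty gives a partition witnessing $\GDC_{P_i}(\mathcal{X}_i,\mathbf{Y}_i,d_i)$, since the loaded part has degree $DC_{P_i}(\mathbf{X}^i,\mathbf{Y}_i)\leq d_i$ and the empty parts trivially satisfy every degree constraint. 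Thus any $I\vDash\mathbf{DC}^*$ also satisfies $\mathbf{\GDC}$, giving $\sup_{I\vDash\mathbf{DC}^*}|Q^I| \leq \sup_{I\vDash\mathbf{\GDC}}|Q^I| = \max_{I\vDash\mathbf{\GDC}}|Q^I|$ by monotonicity of the supremum under set inclusion.

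The main obstacle I would flag is the case where a single relation carries more than one partition constraint, so that $\mathbf{DC}^*$ imposes several degree constraints on the same $P_i$. Here I must verify that the trivial witnessing partitions for the different PCs do not interfere: each PC is witnessed by its own, independently chosen partition of $P_i$, so satisfying all of them simultaneously is unproblematic and the feasibility-set containment still holds. Finiteness of $\max_{I\vDash\mathbf{\GDC}}|Q^I|$, needed for the $\max$ to be well defined, follows from the standing assumption that every variable is covered by a cardinality constraint. Combining the per-summand bound with the constant number of summands completes the argument; I would also note that the reverse inequality $\max_{I\vDash\mathbf{\GDC}}|Q^I| \leq CB_{Comb}(Q,\mathbf{\GDC})$ is already given by Proposition~\ref{prop:gcb}, so the two quantities are in fact asymptotically equal.
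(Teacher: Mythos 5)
Your proposal is correct and follows essentially the same route as the paper's proof: both rest on the observation that any instance satisfying a selected set of DCs trivially satisfies the corresponding PCs via the all-in-one-part partition, so each summand of the extended bound is at most $\sup_{I\vDash\mathbf{\GDC}}|Q^I|$, and the constant number $\prod_i|\mathcal{X}_i|$ of summands is absorbed into the $O(\cdot)$. The only cosmetic difference is that the paper bounds the sum by the maximizing selection times the number of terms, whereas you bound each summand separately; your use of supremum monotonicity also sidesteps the paper's implicit assumption that the supremum is attained.
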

\begin{proof}
    Let $\textbf{X}^1\in  \mathcal{X}_1, \dots, \textbf{X}^l\in  \mathcal{X}_l$ be such that $CB_{Comb}(Q,\textbf{DC})$ is maximized where $\textbf{DC}:= \{DC_{P_1}(\textbf{X}^1,$ $ {\bf Y}_1, d_1), \dots,$ $ DC_{P_l}(\textbf{X}^l, {\bf Y}_l, d_l)\}$.
    Thus, there exists a database $I$ such that $I\vDash \textbf{DC}$ and $|Q^I|=CB_{Comb}(Q,\textbf{DC})$.
    Furthermore, $I$ must then also trivially satisfy $\textbf{\GDC}$ by definition.
    For each \GDC on $P_i$ the witnessing partitioning is simply $P_i = P_i \cupdot \emptyset \cupdot \dots \cupdot \emptyset$.
    Therefore, 
    $$CB_{Comb}(Q,\mathbf{\GDC}) \leq |\mathcal{X}_1| \dots |\mathcal{X}_l| |Q^I| = O(\max_{D\vDash \mathbf{\GDC}}|Q^I|) .$$
    For the last equality, note that $|\mathcal{X}_1| \dots |\mathcal{X}_l|$ is a query-dependent constant as we assume all \GDCs to be on different variables $\mathcal{X}_i$.
\end{proof}

Next, we show how algorithms that are worst-case optimal relative to a cardinality bound can likewise be adapted and become worst-case optimal relative to the extended bound. In this way, progress on WCOJ algorithms based on DCs immediately leads to improved algorithms that take advantage of PCs.


\begin{restatable}{theorem}{thmgeneral}
    Given a cardinality bound $CB_{\mathcal{A}}$. 
    If there exists a join enumeration algorithm $\mathcal{A}$ which is worst-case optimal relative to $CB_{\mathcal{A}}$, then there exists an algorithm $\mathcal{A}^*$ which is worst-case optimal relative to the extended version of the cardinality bound. 
    I.e., for fixed query $Q$, $\mathcal{A}^*$ runs in time $O(|I|+CB_{\mathcal{A}}(Q,\mathbf{\GDC}))$ for arbitrary $\mathbf{\GDC}$ and database $I\vDash \mathbf{\GDC}$.
\end{restatable}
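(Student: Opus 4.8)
The plan is to build $\mathcal{A}^*$ as a black-box wrapper around $\mathcal{A}$ that realizes the decomposition implicit in the proof of Proposition~\ref{prop:gcb}. Write $\mathbf{\GDC} = \{\GDC_{P_i}(\mathcal{X}_i, \mathbf{Y}_i, d_i) \mid i=1,\dots,l\}$ and fix the query-dependent constant $\alpha := \max_i |\mathcal{X}_i|$. First, $\mathcal{A}^*$ decomposes each constrained relation $P_i$ by invoking the \emph{linear} approximate decomposition of Theorem~\ref{thm:algoAprox}, i.e. $\textbf{decompose}(P_i, \mathcal{X}_i, \mathbf{Y}_i)$, obtaining a partition $\bigcupdot_{\mathbf{X}\in\mathcal{X}_i} P_i^{\mathbf{X}} = P_i$ in which every part satisfies $DC_{P_i^{\mathbf{X}}}(\mathbf{X}, \mathbf{Y}_i, |\mathcal{X}_i|\,\GDC(\mathcal{X}_i,\mathbf{Y}_i))$. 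Since $I\vDash\mathbf{\GDC}$ gives $\GDC(\mathcal{X}_i,\mathbf{Y}_i)\le d_i$, each part in fact satisfies the looser constraint $DC_{P_i^{\mathbf{X}}}(\mathbf{X},\mathbf{Y}_i,\alpha d_i)$. For each tuple $(\mathbf{X}^1,\dots,\mathbf{X}^l)\in \mathcal{X}_1\times\dots\times\mathcal{X}_l$ we then assemble the sub-instance whose relation for each $R_s$ is $\bigcap_{i:\,P_i=R_s} P_i^{\mathbf{X}^i}$ (relations carrying no partition constraint are left untouched), run $\mathcal{A}$ on $Q$ over this sub-instance, and concatenate all outputs.

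Correctness is immediate from the first claim in the proof of Proposition~\ref{prop:gcb}: the assembled sub-instances correspond exactly to the relations $R_s^{j_1\cdots j_l}$ there, the answers $Q^{j_1\cdots j_l}$ partition $Q^I$, and hence the concatenation of the outputs equals $Q^I$ with no duplicates and no missing tuples.

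For the runtime, each call to $\textbf{decompose}$ runs in $O(|I|)$ by Theorem~\ref{thm:algoAprox}, and forming the intersections $\bigcap_i P_i^{\mathbf{X}^i}$ across all combinations is also $O(|I|)$: for a fixed $R_s$ the parts form partitions, so each tuple lands in exactly one intersection, and bucketing tuples by their part-membership costs linear time. The number of combinations $|\mathcal{X}_1|\cdots|\mathcal{X}_l|$ is a query-dependent constant. By Definition~\ref{def:worst-case-optimal}, each invocation of $\mathcal{A}$ on a sub-instance (which has the same shape as $Q$, satisfies the constraints $DC_{P_i}(\mathbf{X}^i,\mathbf{Y}_i,\alpha d_i)$, and has size at most $|I|$) runs in time $O(|I| + CB_{\mathcal{A}}(Q, \{DC_{P_i}(\mathbf{X}^i,\mathbf{Y}_i,\alpha d_i)\}_i))$, with a hidden constant depending only on $Q$. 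Summing over the constantly-many combinations and invoking the growth assumption $CB_{\mathcal{A}}(Q,\alpha\cdot\mathbf{DC})\le f_Q(\alpha)\,CB_{\mathcal{A}}(Q,\mathbf{DC})$ together with Definition~\ref{def:GCB} yields
\begin{align*}
\sum_{\mathbf{X}^1,\dots,\mathbf{X}^l} CB_{\mathcal{A}}(Q,\{DC_{P_i}(\mathbf{X}^i,\mathbf{Y}_i,\alpha d_i)\}_i) \le f_Q(\alpha)\sum_{\mathbf{X}^1,\dots,\mathbf{X}^l} CB_{\mathcal{A}}(Q,\{DC_{P_i}(\mathbf{X}^i,\mathbf{Y}_i,d_i)\}_i) = f_Q(\alpha)\,CB_{\mathcal{A}}(Q,\mathbf{\GDC}).
\end{align*}
Since $f_Q(\alpha)$ is a query-dependent constant, the total runtime is $O(|I| + CB_{\mathcal{A}}(Q,\mathbf{\GDC}))$, as required.

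The main obstacle is controlling the degree blow-up introduced by the decomposition while keeping the preprocessing linear. Using the exact quadratic algorithm (Theorem~\ref{thm:algoExact}) would give the optimal per-part DCs but cost $O(|I|^2)$, violating the $O(|I|)$ budget; using the linear algorithm instead forces us to work with DCs inflated by up to a factor $|\mathcal{X}_i|$. The key idea that resolves this is to feed $\mathcal{A}$ the \emph{uniformly} loosened constraints $\alpha d_i$, which the parts provably satisfy; this lets us apply the growth assumption as a single clean scaling and absorb the entire loss into the constant $f_Q(\alpha)$, while also side-stepping any need for monotonicity of $CB_{\mathcal{A}}$ in its degree parameters. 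The remaining care is bookkeeping: ensuring that relations subject to several partition constraints are handled by intersecting their respective decompositions without inflating the linear preprocessing bound.
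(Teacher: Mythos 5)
Your proposal is correct and follows essentially the same route as the paper's proof: decompose each $P_i$ with the linear approximate algorithm of Theorem~\ref{thm:algoAprox}, form the intersected sub-instances as in Proposition~\ref{prop:gcb}, run $\mathcal{A}$ on each of the constantly many combinations, and absorb the $\alpha$-fold degree inflation via the growth assumption $f_Q(\alpha)$. If anything, you are more explicit than the paper about where the growth assumption is invoked (the paper merely states that the constant $\alpha$ can be hidden in the $O$-notation).
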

\begin{proof}
    Suppose we have a query $Q\leftarrow R_1(\textbf{Z}_1)\Join \dots R_k(\textbf{Z}_k)$, a set of partition constraints $\mathbf{\GDC} = \{\GDC_{P_1}(\mathcal{X}_1, {\bf Y}_1, d_1), \dots, \GDC_{P_l}(\mathcal{X}_l, {\bf Y}_l, d_l)\}$, and a database $I \vDash \textbf{\GDC}$. 
    We can follow the same idea already used in the proof of Proposition \ref{prop:gcb}.
    Concretely, we partition each $P_i$ into $(P_i^j)_{j=1,\dots,|\mathcal{X}_i|}$.
    For this, we use Algorithm \ref{alg:approxdecomp} and Theorem \ref{thm:algoAprox}.
    Thus, we get $DC_{P_{i}^{j}}(\textbf{X}_{i}^{j}, \textbf{Y}_i, |\mathcal{X}_i|d_i)$. 
    Let $\alpha:= \max_i|\mathcal{X}_i|$.

    We can now continue following the idea of proof of Proposition \ref{prop:gcb} up to the two claims
    where we only need the first.
    Now, instead of bounding the size of $Q^{j_1,\dots,j_l}$, we now want to compute each subquery using $\mathcal{A}$.
    Thus, note that $DC_{R_{s(i)}^{j_1\dots j_l}}(\textbf{X}_{i}^{j_i}, \textbf{Y}_i, \alpha d_i)$.
    This implies that $\mathcal{A}$ runs in time $O(|I|+CB_{\mathcal{A}}(Q,\{DC_{P_i}(\textbf{X}_i^{j_i}, \textbf{Y}_i, \alpha d_i)\mid i\})).$
    The constant $\alpha$ can be hidden in the $O$-notation while summing up the time required for each $Q^{j_1,\dots,j_l}$ results in an overall runtime of 
    $O(|I|+\sum_{\textbf{X}^1\in  \mathcal{X}_1\cdots\textbf{X}^l\in  \mathcal{X}_l}CB_{\mathcal{A}}(Q,\{DC_{P_i}(\textbf{X}_i^{j_i}, \textbf{Y}_i, d_i)\mid i\}) = O(|I|+CB_{\mathcal{A}}(Q,\textbf{\GDC})).$
\end{proof}

For example, PANDA is a WCOJ algorithm relative to the polymatroid bound, and we can use this approach to translate it to an optimal algorithm for the extended polymatroid bound. While it is an open problem to produce a WCOJ algorithm relative to $CB_{Comb}(Q,\mathbf{DC})$, any such algorithm now immediately results in a WCOJ algorithm relative to  $CB_{Comb}(Q,\mathbf{\GDC})$. Combined with Proposition~\ref{prop:combPC} this implies that such an algorithm then only takes time relative to the worst-case join size of instances that satisfy the same set of PCs.

\begin{corollary}
    If there exists a WCOJ algorithm relative to $CB_{Comb}(Q,\mathbf{DC})$, then there exists an WCOJ algorithm relative to $CB_{Comb}(Q,\mathbf{\GDC})$.
\end{corollary}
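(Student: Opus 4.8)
The plan is to apply the preceding theorem (the one immediately before this corollary, which lifts any WCOJ algorithm relative to a cardinality bound $CB_{\mathcal{A}}$ into a WCOJ algorithm relative to the extended bound $CB_{\mathcal{A}}(Q,\mathbf{\GDC})$) by specializing it to the combinatorics bound. Since $CB_{Comb}$ is itself a cardinality bound in the sense of Definition~\ref{def:cardinality-bound}, and since the theorem is stated for an arbitrary cardinality bound $CB_{\mathcal{A}}$, the corollary is essentially an immediate instantiation: take $CB_{\mathcal{A}} = CB_{Comb}(Q,\mathbf{DC})$.

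\begin{proof}
    Suppose there exists a join algorithm $\mathcal{A}$ that is worst-case optimal relative to $CB_{Comb}(Q,\mathbf{DC})$. Since $CB_{Comb}$ is a valid cardinality bound, we may apply the preceding theorem with $CB_{\mathcal{A}} := CB_{Comb}$. This yields an algorithm $\mathcal{A}^*$ that is worst-case optimal relative to the extended version of the combinatorics bound, i.e., $\mathcal{A}^*$ runs in time $O(|I| + CB_{Comb}(Q,\mathbf{\GDC}))$ for every set of partition constraints $\mathbf{\GDC}$ and every database $I \vDash \mathbf{\GDC}$. By Definition~\ref{def:worst-case-optimal}, this is exactly the statement that $\mathcal{A}^*$ is worst-case optimal relative to $CB_{Comb}(Q,\mathbf{\GDC})$.
\end{proof}

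The only point requiring a moment of care is to confirm that the extended combinatorics bound $CB_{Comb}(Q,\mathbf{\GDC})$ obtained through Definition~\ref{def:GCB} is the object the corollary intends when it writes ``WCOJ relative to $CB_{Comb}(Q,\mathbf{\GDC})$'', rather than some alternative direct definition of $\sup_{I\vDash\mathbf{\GDC}}|Q^I|$. This is settled by Proposition~\ref{prop:combPC}, which shows these two quantities agree up to a query-dependent constant factor; hence optimality relative to one is optimality relative to the other, and no genuine obstacle arises. In short, the corollary is a direct corollary of the general lifting theorem combined with Proposition~\ref{prop:combPC}, and there is no substantive additional work to do beyond the instantiation.
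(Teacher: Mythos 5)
Your proposal is correct and matches the paper's intended argument exactly: the corollary is obtained by instantiating the general lifting theorem with $CB_{\mathcal{A}} := CB_{Comb}$, and the paper likewise invokes Proposition~\ref{prop:combPC} to connect the extended bound to the true worst-case join size under the partition constraints. There is nothing further to add.
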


\section{Conclusions and Future Work}
\label{sec:conclusion}

In this work, we introduced PCs as a generalization of DCs, uncovering a latent structure within relations and that is present in standard benchmarks. PCs enable a more refined approach to query processing, offering asymptotic improvements to both cardinality bounds and join algorithms. We presented algorithms to compute PCs and identify the corresponding partitioning that witness these constraints. To harness this structure, we then developed techniques to lift both cardinality bounds and WCOJ algorithms from the DC framework to the PC framework. Crucially, our use of DC-based bounds and algorithms as black boxes allows future advances in the DC setting to be seamlessly integrated into the PC framework.

On the practical side, future research should explore when and where it is beneficial to leverage the additional structure provided by PCs. In particular, finding ways to minimize the constant factor overhead by only considering a useful subset of PCs or sharing work across evaluations on different partitions could yield significant practical improvements in query performance. On the other hand, further theoretical work should try to incorporate additional statistics into this partitioning framework, e.g., $l_p$-norms of degree sequences. Ultimately, the goal of this line of work is to capture the inherent complexity of join instances through both the structure of the query and the data.

\bibliography{ref}

\appendix

\section{Further Details for Section \ref{sec:gdc}}

\thmpre*
\begin{proof}
To prove this theorem, we assume towards a contradiction that there exists a partitioning $(R^1,\dots,R^k)$ of $R$ such that,
    \begin{align*}
        \max_{i=1,\dots,k}\GDC_{R^i}(\mathcal{X},\mathbf{Y})< \GDC_{R}(\mathcal{X},\mathbf{Y})/k.
    \end{align*}
Then, we can partition each $R^i$ such as to witness $\GDC_{R^i}(\mathcal{X},\mathbf{Y})$. 
I.e., let $\bigcup_{\textbf{X}\in \mathcal{X}} R^{i,\textbf{X}}=R^i$ be such that $\max_{\textbf{X}\in \mathcal{X}}DC_{R^{i,\textbf{X}}}(\textbf{X}, \textbf{Y}) = \GDC_{R^i}(\mathcal{X},\mathbf{Y})$.

Then, for each fixed $\textbf{X}\in \mathcal{X}$, we can combine the subrelations $R^{1,\textbf{X}}, \dots, R^{k,\textbf{X}}$ into a single relation $R^\textbf{X}:=\bigcup_{i=1}^k R^{i,\textbf{X}}$.
These relations $(R^\textbf{X})_{\textbf{X}\in \mathcal{X}}$ then together partition $R$ as
\[\bigcup_{\textbf{X}\in \mathcal{X}}R^\textbf{X} = \bigcup_{\textbf{X}\in \mathcal{X}}\bigcup_{i=1}^k R^{i,\textbf{X}} = \bigcup_{i=1}^k R^{i} = R.\]
Now, let us compute the degree constraints for each $R^\textbf{X}$.
To that end, let $\textbf{x}\in \pi_{\textbf{X}} R^\textbf{X}$ and $R^{\textbf{X},\textbf{x}} = \sigma_{\textbf{X}=\textbf{x}}R^\textbf{X}$.
We can partition $R^{\textbf{X},\textbf{x}}$ according to $(R^{i,\textbf{X}})_i$ and set $R^{i,\textbf{X},\textbf{x}}:=R^{\textbf{X},\textbf{x}} \cap R^{i,\textbf{X}}$.
Due to the degree constraint $DC_{R^{i,\textbf{X}}}(\textbf{X}, \textbf{Y})$ we know that $|\pi_\textbf{Y}R^{i,\textbf{X},\textbf{x}}| \leq \GDC_{R^i}(\mathcal{X},\mathbf{Y})$ for every $i=1,\dots,k$.
Thus, in total 
\[|\pi_{\textbf{Y}}R^{\textbf{X},\textbf{x}}| = |\pi_{\textbf{Y}}(\bigcup_i R^{i,\textbf{X},\textbf{x}})|\leq \sum_i |\pi_{\textbf{Y}}(R^{i\textbf{X},\textbf{x}})|\leq k\cdot\GDC_{R^i}(\mathcal{X},\mathbf{Y}) < \GDC_{R}(\mathcal{X},\mathbf{Y}).\]
Note, that the last inequality holds due to our assumption.
As $\textbf{x}$ was selected arbitrarily we get $DC_{R^\textbf{X}}(\textbf{X},\textbf{Y})<\GDC_{R}(\mathcal{X},\mathbf{Y})$.
Furthermore, as $\textbf{X}$ was selected arbitrarily and $(R^\textbf{X})_{\textbf{X}\in \mathcal{X}}$ partitions $R$, we get that 
\[PC_R(\mathcal{X},\textbf{Y}) \leq \max_{\textbf{X}\in \mathcal{X}}DC_{R^\textbf{X}}(\textbf{X},\textbf{Y}) < PC_R(\mathcal{X},\textbf{Y})\]
due to the definition of partition constraints.
\end{proof}

\section{Further Details for Section \ref{sec:hyprtriangle}}
\label{app:hyprtriangle}

\lemCBsuperlinear*
\begin{proof}
    It suffices to provide a collection of databases $\mathcal{I}$ such that the $I\vDash \textbf{DC}$ and $|\hexq^I| = \Omega(n^\frac{4}{3})$ for $I\in \mathcal{D}$.
    To accomplish this, we introduce a new relation $R$ defined by (the subscripts $X,Y,Z$ are only used for clarity as to what constant belongs to the domain of which attribute)
    $$R_{X,Y,Z} = \{(i_X, j_Y, (i + j - \lfloor \frac{1}{2}n^\frac{1}{3} \rfloor \mod n^\frac{2}{3})_Z) \mid i = 0,\dots, n^\frac{2}{3}-1, j = 0,\dots, n^\frac{1}{3}-1\}.$$
    For simplicity, we assume $n$ to be the cube of an odd number.

    Intuitively, think of $R$ as a bipartite graph from the domain of $X$ to the domain of $Z$ where $Y$ identifies the edge for a given $x\in dom(X)$ or $z\in dom(Z)$.
    The domain of $X$ and $Z$ is of size $n^{\frac{2}{3}}$ while the domain of $Y$ is of size $n^{\frac{1}{3}}$.
    Thus, every $x\in dom(X)$ has $n^{\frac{1}{3}}$ neighbors in $dom(Z)$ and, due to symmetry, also the other way around.
    $R_{X,Y,Z}$ is depicted in Figure \ref{fig:mod} for $n=27$.
    \begin{figure}
        \centering
        \includegraphics[width=0.6\textwidth]{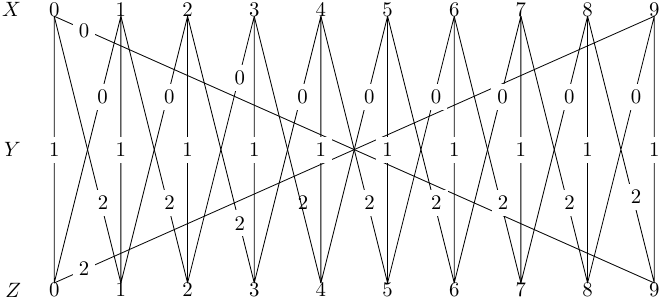}
        \caption{Depiction of $R_{X,Y,Z}$ for $n=27$}
        \label{fig:mod}
    \end{figure}
    
    Notice that, $DC_{R_{X,Y,Z}}(XY,XYZ,1)$ and $DC_{R_{X,Y,Z}}(YZ,XYZ,1)$.
    We set
    \begin{align*}
        R_1 = R_{A,W,B}, \quad\quad R_2 = R_{B,U,C}, \quad\quad R_3 = R_{C,V,A}.
    \end{align*}
    Lastly, we set 
    $$R_4 = \{(i_{U}, j_{V}, k_{W}) \mid i,j,k = 1,\dots, n^\frac{1}{3}\}.$$

    Clearly, the database constructed in this way satisfies \textbf{DC} (but not \textbf{PC}).
    Now let us compute the number of answers in $\hexq^I$.
    First notice that $R_4 = dom(U) \times dom(V) \times dom(W)$.
    Thus, we can simply ignore $R_4$ and it suffices to look at $\pi_{AB}R_1 \Join \pi_{BC}R_2, \Join \pi_{AC}R_3$.
    
    Let $i_A\in dom(A)$ be arbitrary.
    It is connected to $\{(i-\lfloor\frac{1}{2}n^\frac{1}{3}\rfloor)_B, \dots, (i + \lfloor\frac{1}{2}n^\frac{1}{3}\rfloor)_B\}$ via $\pi_{AB}R_1$ and to $\{(i-\lfloor\frac{1}{2}n^\frac{1}{3}\rfloor)_C, \dots, (i + \lfloor\frac{1}{2}n^\frac{1}{3}\rfloor)_C\}$ via $\pi_{AC}R_3$.
    Furthermore, $N(i,B) := \{(i-\lfloor\frac{1}{4}n^\frac{1}{3}\rfloor)_B, \dots, (i + \lfloor\frac{1}{4}n^\frac{1}{3}\rfloor)_B\}$ are all connected to $N(i,C):=\{(i-\lfloor\frac{1}{4}n^\frac{1}{3}\rfloor)_C, \dots, (i + \lfloor\frac{1}{4}n^\frac{1}{3}\rfloor)_C\}$ in $\pi_{BC}R_2$.
    Thus, 
    $$\{\{i_A\}\times N(i,B) \times N(i,C) \mid i = 0, \dots, n^\frac{2}{3}-1\}\subseteq \pi_{AB}R_1 \Join \pi_{BC}R_2, \Join \pi_{AC}R_3.$$
    
    Consequently, $|\hexq^I| = |\pi_{AB}R_1 \Join \pi_{BC}R_2, \Join \pi_{AC}R_3| = \Omega(n^\frac{4}{3})$.
\end{proof}

\lemalgohyperlinear*

\begin{proof}
    We start by proving that a partitioning $R_4 = R_{4}^{U} \cup R_{4}^{V} \cup R_{4}^{W}$ with $DC_{R_{4}^{U}}(U,UVW, 3),$ $ DC_{R_{4}^{V}}(V,UVW, 3), DC_{R_{4}^{W}}(W,UVW, 3)$ can be computed in linear time when $PC_{R_4}(\{U,V,W\},$ $ UVW, 1)$.
    Let us assume, w.l.o.g, that the domains of $U, V, W$ are disjoint and let $\mathcal{V}(d)$ be the variable associated with the constant $d\in \mathcal{D} :=dom(U)\cup dom(V)\cup dom(W)$. We claim that there exists a  $d\in \mathcal{D}$ such that $|\sigma_{\mathcal{V}(d) = d}R_4|\leq 3$, i.e., $d$ appears at most 3 times.

    Assume towards a contradiction that $|\sigma_{\mathcal{V}(d) = d}R_4| > 3$ for all $d\in \mathcal{D}$. Then, $|R_4|>3|dom(U)|, |R_4|>3|dom(V)|, |R_4|>3|dom(W)|$. However, $\GDC_{R_4}(\{U,V,W\}, UVW, 1)$ ensures that there is a three-way partition $R_4 = R_{4}^{U} \cup R_{4}^{V} \cup R_{4}^{W}$ satisfying $DC_{R_{4}^{U}}(U,UVW,1),$ $ DC_{R_{4}^{V}}(V,UVW,1),$ $ DC_{R_{4}^{W}}(W,UVW,1)$. 
    One of the parts, w.l.o.g., say $R_{4}^{U}$, then has to contain at least $\frac{1}{3}|R_4|$ tuples.
    Thus, $|R_4^{U}| \geq \frac{1}{3}|R_4| > |dom(U)|$.
    But, at the same time, $|R_{4}^{U}| = \sum_{d\in dom(U)}|\sigma_{U=d}R_{4}^{U}| \leq \sum_{d\in dom(U)}1 = |dom(U)|$ leads to a contradiction.

    Therefore, we can construct $R_{4}^{U}, R_{4}^{V}, R_{4}^{W}$ as follows: Initialize $R_{4}^{U}, R_{4}^{V}, R_{4}^{W}$ to be empty. Then, select a $d_1\in \mathcal{D}$ such that $\sigma_{\mathcal{V}(d_1) = d_1}R_4 \leq 3$, add $\sigma_{\mathcal{V}(d_1) = d_1}R_4$ to $R_{4}^{\mathcal{V}(d_1)}$, and remove $\sigma_{\mathcal{V}(d_1) = d_1}R_4$ from $R_4$. We can repeat this and select a $d_2\in \mathcal{D}$ such that $\sigma_{\mathcal{V}(d_2) = d_2}R_4 \leq 3$, add $\sigma_{\mathcal{V}(d_2) = d_2}R_4$ to $R_{4}^{\mathcal{V}(d_2)}$, and remove $\sigma_{\mathcal{V}(d_2) = d_2}R_4$ from $R_4$, and so on.

   Since each $d_i$ is unique and $\GDC_{R'_4}(\{U,V,W\}, UVW, 1)$ also holds for any $R'_4\subseteq R_4$, we can be sure that $DC_{R^X_{4}}(X,UVW, 3)$ holds for every $X=U,V,W$ along the way and in the end. 
   This process can be implemented to run in linear time by maintaining three priority queues, one for each variable $U,V,W$.
   To that end, we start by bucket sorting all tuples three times to get the values of $|\sigma_{\mathcal{V}(d) = d}R_4|$ for each $d\in \mathcal{D}$.
   These collections $\sigma_{\mathcal{V}(d) = d}R_4$ are then added to the queue for the variable $\mathcal{V}(d)$ with the key $|\sigma_{\mathcal{V}(d) = d}R_4|$.
   Lookups in these queue only happen up to a key value of 3 and thus only take constant time.
   For updates, when $\sigma_{\mathcal{V}(d_i) = d_i}R_4$ are added to a partition $R_{4}^{U}, R_{4}^{V}$ or $ R_{4}^{W}$, we simply have to update the values for the other constants in $\sigma_{\mathcal{V}(d_i) = d_i}R_4$.
   Concretely, lets assume $\mathcal{V}(d_i)=U$.
   Then, for each tuple $(u,v,w)$, we have to remove $(u,v,w)$ from $\sigma_{V = v}R_4$ and $\sigma_{W = w}R_4$ from the queue for $V$ and $W$, respectively, as well as updating the corresponding key values $|\sigma_{V = v}R_4|$ and $\sigma_{W = w}R_4$.
   The order in the queue is only important for key values up to a value of 3.
   Thus, updates can implemented in constant time.

    Now let us proceed to the for loops.
    Notice that the sizes of the projections are constant due to the DCs.
    For example, consider the first set of nested loops.
    Given a $(a,u,b)$ from $R_1$, there are at most 3 matching $(v,s)$ from $R_{4,U}$ due to $DC_{R_{4,U}}(U,UVS,3)$.
    Furthermore, there is at most one matching $c$ from $R_2$ and $R_3$, independanty due to $DC_{R_{2}}(BV,BVC,3)$ and $DC_{R_{3}}(AS,ASC,1)$.
    Thus, for the loops to only take linear time, the relations used in the nested loops simply need to be sorted such that the selection part only requires constant time.
    This is possible by hashing.

    The correctness of the algorithm follows since the sets of nested loops respectively compute $R_1\Join R_2\Join R_3\Join R_{4}^{U}$, $R_1\Join R_2\Join R_3\Join R_{4}^{V}$, and $R_1\Join R_2\Join R_3\Join R_{4}^{S}$.
    Together they equal $R_1\Join R_2\Join R_3\Join R_{4} = \hexq^I$ as required.
\end{proof}

\lemvaat*
\begin{proof}
    It suffices to provide a collection of databases $\mathcal{I}$ such that $I\vDash \textbf{\GDC}$ and $\max_i|Q_{\varhexagon i}^I| = \Omega(n^{1.5})$ for every $I\in \mathcal{I}$ and ordering of of the variables.
    To that end, we introduce two relations, a relation $C_{X,Y,Z}(X,Y,Z)$ and a set of disjoint paths $P_{X,Y,Z}(X,Y,Z)$:
    \begin{align*}
        C_{X,Y,Z} &= \{(i_{X},j_{Y},(i\sqrt{\frac{n}{7}}+j)_{Z}) \mid i,j= 0,\dots, \sqrt{\frac{n}{7}}-1\}, \\
        P_{XYZ} &= \{(i_{X},i_{Y},i_{Z}) \mid i= 0,\dots, \frac{n}{7}-1\}.
    \end{align*}
    For $P_{X,Y,Z}(X,Y,Z)$, the domains of $X,Y,Z$ are of size $\Theta(n)$ and $P_{X,Y,Z}$ simply matches $X$ to $Y$ and $Z$ such that each $d\in dom(X)\cup dom(Y)\cup dom(Z)$ appears in exactly one tuple of $P_{X,Y,Z}$.
    On the other hand, think of $C_{X,Y,Z}$ as a complete bipartite graph from the domain of $X$ to the domain of $Y$ and $Z$ uniquely identifies the edges.
    Thus, $|dom(X)|=|dom(Y)| = \Theta(\sqrt{n})$ while $|dom(Z)| = \Theta(n)$.
    Notice that for both relations, $DC(XY,XYZ,1), DC(Z,XYZ,1)$ hold.
    I.e., any pair of variables determine the last variable and there is a variable that determines the whole tuple on its own.
    $C_{X,Y,Z}$ is depicted in Figure \ref{fig:star} for $\frac{n}{7}=16$.
    



    \begin{figure}
        \centering
        \includegraphics[width=0.4\textwidth]{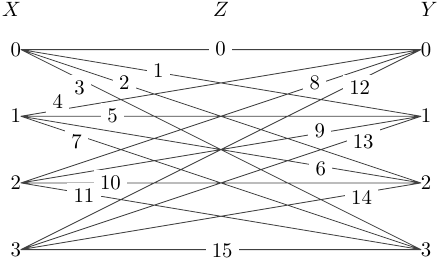}
        \caption{Depiction of $C_{X,Y,Z}$ for $\frac{n}{7}=16$}
        \label{fig:star}
    \end{figure}

    The database instance will be the disjoint union (by renaming constants) of 7 databases.
    4 are of the form $D^R = (R_1^R, R_2^R, R_3^R, R_4^R), R\in \{R_1, R_2, R_3, R_4\}$, and 3 are of the form $D^{XY} = (R_1^{XY}, R_2^{XY}, R_3^{XY},$ $ R_4^{XY}), XY\in \{AU, BV, CW\}$.
    For $D^{R_i}$, We define $R_i^{R_i} = P_{var(R_i)}$ and for $j\neq i$ we set $R_j^{R_i} = C_{var(R_j)\setminus var(R_i), var(R_j)\cap var(R_i)}$.
    Thus, e.g., $R_1^{R_1}=R_{A,W,B}$ and $R_2^{R_1}=C_{C,U,B}$
    
    Furthermore, for $D^{XY}$, we define $R_i^{XY} = C_{var(R_i)\setminus \{X,Y\}, var(R_i)\cap \{XY\}}$.
    Thus, e.g., $R_1^{AU} = C_{B,W,A}$
    Now, by renaming of constants, we can assume the constants of all databases $D^R,D^{XY}$ to be pairwise disjoint.
    In total, $D$ is the disjoint union 
    $$D = \bigcupdot_{R\in \{R_1, R_2, R_3, R_4\}} D^R \cupdot \bigcupdot_{XY \in \{AV, BS, CU\}} D^{XY}.$$

    Clearly, $D\vDash \textbf{PC}$.    
    Now, let $X_1, \dots, X_6$ be an arbitrary variable order for $\hexq$ and the database $D$ as described before.
    Then a VAAT algorithm based on this variable order at least needs to compute the sets $Q^D_{\varhexagon 1} = \Join_i\pi_{X_1}R_i,\dots, Q^D_{\varhexagon 6} = \Join_i\pi_{X_1\cdots X_6}R_i.$
    Importantly, let us consider the set $Q^D_{\varhexagon 4} = \Join_i\pi_{X_1\cdots X_4}R_i$.
    There are two cases:
    
    \textbf{Case 1:} $X_5$ and $X_6$ appear conjointly in a relation.
    Due to the symmetry of the query and the database we can assume,w.l.o.g, $X_5X_6 = UV$ and $$\Join_i\pi_{X_1\cdots X_4}R_i = \Join_i\pi_{ABCW}R_i.$$
    Furthermore, 
    \begin{align*}
        \Join_i\pi_{ABCU}R_i \supseteq & \Join_i\pi_{ABCU}R_i^{R_4} \\
        =& R_1^{R_4} \Join \pi_{BC}R_2^{R_4} \Join \pi_{AC}R_3^{R_4} \Join \pi_{U} R_4^{R_4}\\
        =& C_{A,B,W} \Join \pi_{BC}C_{C,B,U} \Join \pi_{AC} C_{A,C,V} \Join \pi_{W} P_{U,V,W}.
    \end{align*}
    Notice that 
    \begin{align*}
        C_{A,B,W}  = & \{(i_{A},j_{B},(i\sqrt{\frac{n}{7}}+j)_{W}) \mid i,j= 0,\dots, \sqrt{\frac{n}{7}}-1\}\\
        \pi_{BC}C_{C,B,U} =& \{(i_{C},j_{B}) \mid i,j= 0,\dots, \sqrt{\frac{n}{7}}-1\}\\
        \pi_{AC} C_{A,C,V} =& \{(i_{A},j_{C}) \mid i,j= 0,\dots, \sqrt{\frac{n}{7}}-1\} \\
        \pi_{W} P_{U,V,W} = & \{(i\sqrt{\frac{n}{7}}+j)_{W}) \mid i,j= 0,\dots, \sqrt{\frac{n}{7}}-1\}.
    \end{align*}
    Thus,
    $$\Join_i\pi_{ABCU}R_i \supseteq \{i_{A}, j_{B}, k_{C}, (i\sqrt{n}+j)_{U})\mid i,j,k=0,\dots,\sqrt{\frac{n}{7}}-1\}$$
    and 
    $$\max_j|Q^D_{\varhexagon j}| \geq |\Join_i\pi_{ABCU}R_i| = \Omega(n^{1.5}).$$
    
    \textbf{Case 2:} $X_5$ and $X_6$ do not appear conjointly in a relation.
    Due to the symmetry of the query and the database we can assume, w.l.o.g, $X_5X_6 = AU$ and $$\Join_i\pi_{X_1\cdots X_4}R_i = \Join_i\pi_{BCUV}R_i.$$
    Furthermore, 
    \begin{align*}
    \Join_i\pi_{BCUS}R_i \supseteq &\pi_{WB}R_1^{AU} \Join \pi_{BC}R_2^{AU} \Join \pi_{CV}R_3^{AU} \Join \pi_{VW} R_4^{AU}\\
    = & \pi_{WB}C_{W,B,A} \Join \pi_{BC}C_{B,C,U} \Join \pi_{CV}C_{C,V,A} \Join \pi_{VW} C_{V,W,U}.
    \end{align*}
    Notice that 
    \begin{align*}
        \pi_{WB}C_{W,B,A}  = & \{(i_{W},j_{B}) \mid i,j= 0,\dots, \sqrt{\frac{n}{7}}-1\}\\
        \pi_{BC}C_{B,C,U} =& \{(i_{B},j_{C}) \mid i,j= 0,\dots, \sqrt{\frac{n}{7}}-1\}\\
        \pi_{CV}C_{C,V,A} =& \{(i_{C},j_{V}) \mid i,j= 0,\dots, \sqrt{\frac{n}{7}}-1\} \\
        \pi_{VW} C_{V,W,U}= & \{(i_{V},j_{W}) \mid i,j= 0,\dots, \sqrt{\frac{n}{7}}-1\}.
    \end{align*}
    Thus,
    $$\Join_i\pi_{ABCU}R_i \supseteq \{(i_{B}, j_{C}, k_{U}, l_{S})\mid i,j,k,l=0,\dots,\sqrt{\frac{n}{7}}-1\}$$
    and 
    $$\max_j|Q^D_{\varhexagon j}| \geq |\Join_i\pi_{BCUS}R_i| = \Omega(n^{2}).$$
    This completes the proof.
\end{proof}

\section{Further Details for Section \ref{sec:general}}
\label{app:general}
\thmalgoAprox*
\begin{proof}
    On the one hand, we need to verify that Algorithm \ref{alg:approxdecomp} can be implemented to run in linear time, and, on the other hand, that this process indeed leads to the stated approximation guarantee.

    We start with discussing the runtime.
    To that end, we have to ensure that the while-loop only requires linear time and thus, we must poll the minimum in constant time.
    This is achievable by storing pairs $\textbf{X}\in \mathcal{X},\textbf{x}\in \pi_{\textbf{X}}R$ in a queue with the value determining the position in the queue being $|\pi_\textbf{Y}\sigma_{\textbf{X} = \textbf{x}}R|\neq 0$.
    Ties are broken arbitrarily.
    Using bucket sort, creating the priority queue only requires linear time as $|\pi_\textbf{Y}\sigma_{\textbf{X} = \textbf{x}}R| \leq |R|$.
    We assume sufficient pointers are saved while creating the queue, in particular between neighbors in the queue.
    Thus, note that decreasing the key value of an element by 1 is possible in constant time.
    Then, given a pair $\textbf{X},\textbf{x}$, computing $\sigma_{\textbf{X} = \textbf{x}}R$ is possible in time $O(|\sigma_{\textbf{X} = \textbf{x}}R|)$.
    Furthermore, we claim that in time $O(|\sigma_{\textbf{X} = \textbf{x}}R|)$ we can remove $\sigma_{\textbf{X} = \textbf{x}}R$ from $R$ and maintain the queue.
    Maintaining the queue means 
    \begin{itemize}
        \item removing pairs $\textbf{X}',\textbf{x}'$ from the queue where $\textbf{x}'$ is no longer in $\pi_{\textbf{X}'}(R\setminus \sigma_{\textbf{X} = \textbf{x}}R)$, and
        \item updating the value of pairs $\textbf{X}',\textbf{x}'$ where $|\pi_\textbf{Y}\sigma_{\textbf{X}' = \textbf{x}'}R|\neq |\pi_\textbf{Y}\sigma_{\textbf{X}' = \textbf{x}'}(R\setminus \sigma_{\textbf{X} = \textbf{x}}R)|$.
    \end{itemize}  
            
    To accomplish this, 
    for each $\textbf{X}'\in \mathcal{X}-\textbf{X}$ and $\textbf{x}' \in \pi_{\textbf{X}'}\sigma_{\textbf{X} = \textbf{x}}R$ we decrease the value of $\textbf{X}', \textbf{x}'$ by $|\sigma_{\textbf{X}' = \textbf{x}'}\sigma_{\textbf{X} = \textbf{x}}R|$ in the queue (in total, this requires at most $|\mathcal{X}||\sigma_{\textbf{X}=\textbf{x}}R|$ decreases of a key value by 1).
    Notice that, $\textbf{x}'$ no longer being in $\pi_{\textbf{X}'}(R\setminus \sigma_{\textbf{X} = \textbf{x}}R)$ happens exactly when $|\pi_\textbf{Y}\sigma_{\textbf{X}' = \textbf{x}'}R|\neq |\pi_\textbf{Y}\sigma_{\textbf{X}' = \textbf{x}'}(R\setminus \sigma_{\textbf{X} = \textbf{x}}R)| = 0$.
    Hence, if the value of a tuple drops to 0, the corresponding tuple can be removed from the queue. 
    Hence, as we are considering data complexity, Algorithm \ref{alg:approxdecomp} requires linear time in total.
    
    For the correctness of the approximation assume towards a contradiction that this is not the case.
    To that end, let $\bigcupdot_{\textbf{X} \in \mathcal{X}}R^\textbf{X} = R$ be the partitioning created by Algorithm \ref{alg:approxdecomp} and $\textbf{X}'\in \mathcal{X},\textbf{x}'\in \pi_{\textbf{X}'}R$ be the first pair $(\textbf{X}',\textbf{x}')$ such that $|\pi_\textbf{Y}\sigma_{\textbf{X}'=\textbf{x}'}R^{\textbf{X}'}|> |\mathcal{X}|d$.
    Note that the tuples $|\pi_\textbf{Y}\sigma_{\textbf{X}'=\textbf{x}'}R^{\textbf{X}'}|$ all have to be added to $R^{\textbf{X}'}$ at the same time.
    Let us consider the state of $R$ in the algorithm right before they are removed from $R$, which we will denote by $R_\mathcal{A}$.
    Hence, $\textbf{X}',\textbf{x}'=\argmin_{\textbf{X}\in \mathcal{X},\textbf{x}\in \pi_\textbf{X}R_\mathcal{A}}|\pi_\textbf{Y}\sigma_{\textbf{X} = \textbf{x}}R_\mathcal{A}|$ and, consequently, $|\pi_\textbf{Y}\sigma_{\textbf{X} = \textbf{x}}R_\mathcal{A}| > |\mathcal{X}|d$ for all $\textbf{X}\in \mathcal{X}, \textbf{x}\in \pi_{\textbf{X}}R_\mathcal{A}$.
    However, since $R_\mathcal{A} \subseteq R$, there exists an optimal partition $\bigcupdot_{\textbf{X}\in \mathcal{X}}R_\mathcal{A}^\textbf{X} = R_\mathcal{A}$ such that $|\pi_\textbf{Y}\sigma_{\textbf{X} = \textbf{x}}R_\mathcal{A}^\textbf{X}| \leq d$ holds for all $\textbf{X}\in \mathcal{X}, \textbf{x}\in \pi_\textbf{X}R_\mathcal{A}$.
    Therefore, on the one hand, $|\mathcal{X}|\cdot|\pi_\textbf{Y}R_\mathcal{A}| = \sum_{\textbf{X}\in \mathcal{X}} \sum_{\textbf{x} \in \pi_\textbf{X}R_\mathcal{A}}|\pi_\textbf{Y}\sigma_{\textbf{X} = \textbf{x}}R_\mathcal{A}| > \sum_{\textbf{X}\in \mathcal{X}} \sum_{\textbf{x} \in \pi_\textbf{X}R_\mathcal{A}}|\mathcal{X}|d$ and, on the other hand, $|\pi_\textbf{Y}R_\mathcal{A}| \leq  \sum_{\textbf{X}\in \mathcal{X}}\sum_{\textbf{x}\in \pi_\textbf{X}R_\mathcal{A}}|\pi_\textbf{Y}\sigma_{\textbf{X} = \textbf{x}}R_\mathcal{A}^\textbf{X}| \leq \sum_{\textbf{X}\in \mathcal{X}}\sum_{\textbf{x}\in \pi_\textbf{X}R_\mathcal{A}}d$.
    Both cannot be true at the same time.
\end{proof}

\thmalgoExact*
\begin{proof} 
We start by arguing the correctness of the algorithm.
To that end, Algorithm \ref{alg:exactdecomp} maintains that $(R^\textbf{X})_{\textbf{X}\in \mathcal{X}}$ is an optimal decomposition of $R_{\text{dec}} :=\bigcupdot_{\textbf{X}\in \mathcal{X}}R^\textbf{X}$ with $\GDC_{R_{\text{dec}}}(\mathcal{X},\textbf{Y})=d = \max_{\textbf{X}\in \mathcal{X}}DC_{R^\textbf{X}}(\textbf{X},\textbf{Y})$ as a loop invariant (for Lines 5-14).
Additionally, $R_{\text{dec}}\cupdot R_{\text{cur}} = R_{\text{ori}}$, where $R_{\text{cur}}$ and $ R_{\text{ori}}$ respectively are the current and original value of $R$.

Of course, the loop invariant holds before the first loop iteration.
Then, the algorithm iterative checks the existence of an augmenting path to allocate the next $\textbf{y}_0\in \pi_\textbf{Y}R$ and, if it exists, computes a shortest one $(\textbf{y}_1,\dots,\textbf{y}_m,\textbf{X}_1,\dots,\textbf{X}_{m+1})$.
In that case, it cascadingly reallocates $\textbf{y}_1,\dots,\textbf{y}_m$ to $\textbf{X}_2,\dots,\textbf{X}_{m+1}$ and newly allocates $\textbf{y}_0$ to $\textbf{X}_1$.
As alluded to before, this can only result in an increase of the DC for $R^{\textbf{X}_{m+1}}$.
However, Property \ref{prop:dcleq} of augmenting paths ensures that this does not increase the overall maximum $\max_{\textbf{X}\in \mathcal{X}}DC_{R^\textbf{X}}(\textbf{X},\textbf{Y})$.

Thus, it only remains to argue that there always exists an augmenting path if the \GDC of $R_{\text{dec}} \cup \sigma_{\textbf{Y} = \textbf{y}_0}R$ is the same as the \GDC of $R_{\text{dec}}$.
Hence, if no augmenting path can be found, it is justified to increase $d$ by 1 and it does not matter where $\sigma_{\textbf{Y} = \textbf{y}_0}R$ is allocated to.

To that end, let $(R_{\text{opt}}^\textbf{X})_{\textbf{X}\in \mathcal{X}}$ be an optimal partitioning of $R_{\text{dec}} \cup \sigma_{\textbf{Y} = \textbf{y}_0}R$.
Moreover, let $\max_\textbf{X} DC_{R_\text{opt}^\textbf{X}}(\textbf{X},\textbf{Y}) = \max_\textbf{X} DC_{R^\textbf{X}}(\textbf{X},\textbf{Y})$.
Intuitively, the optimal partitioning $(R_{\text{opt}}^\textbf{X})_{\textbf{X}\in \mathcal{X}}$ tells us where the tuples belong and leads us to an augmenting path.
We start by setting $\textbf{X}_1$ such that $\textbf{y}_0\in \pi_\textbf{Y}R_\text{opt}^{\textbf{X}_1}$.
Then, either $(\textbf{X}_1)$ is an augmenting path (w.r.t.\ $(R^\textbf{X})_{\textbf{X}\in \mathcal{X}}$) or
Property \ref{prop:dcleq} is not satisfied.
In the latter case, there must be at least as many elements in 
$|\pi_{\textbf{Y}}\sigma_{\textbf{X}_1 = \textbf{y}_0}R^{\textbf{X}_1}| $ as in $|\pi_{\textbf{Y}}\sigma_{\textbf{X}_1 = \textbf{y}_0}R_\text{opt}^{\textbf{X}_1}|$.
However, $\pi_{\textbf{Y}}R_\text{opt}^{\textbf{X}_1}$ contains $\textbf{y}_0$ and thus there is a ``misplaced'' element  $\textbf{y}_1\in \pi_{\textbf{Y}}\sigma_{\textbf{X}_1 = \textbf{y}_0}R^{\textbf{X}_1} \setminus \pi_{\textbf{Y}}\sigma_{\textbf{X}_1 = \textbf{y}_0}R_\text{opt}^{\textbf{X}_1}$.
We can then set $\textbf{X}_2$ such that $\textbf{y}_1\in \pi_\textbf{Y}R_\text{opt}^{\textbf{X}_2}$.

Due to a similar argumentation as before, either $(\textbf{y}_1,\textbf{X}_1, \textbf{X}_2)$ is an augmenting path or there must exist a misplaced $\textbf{y}_2\in \pi_{\textbf{Y}}\sigma_{\textbf{X}_2 = \textbf{y}_1}R^{\textbf{X}_2} \setminus \pi_{\textbf{Y}}\sigma_{\textbf{X}_2 = \textbf{y}_1}R_\text{opt}^{\textbf{X}_2}$ and we can set $\textbf{X}_3$ such that $\textbf{y}_2\in \pi_\textbf{Y}R_\text{opt}^{\textbf{X}_3}$.

We proceed inductively: Let $(\textbf{y}_1,\dots, \textbf{y}_i,\textbf{X}_1, \dots, \textbf{X}_{i+1})$ be such that 
\begin{enumerate}
    \item For all $j\in \{0,\dots,i\}$ we have $\textbf{y}_j\in \pi_{\textbf{Y}}R^{\textbf{X}_{j+1}}_{\text{opt}}$.
    \item For all $j\in \{1,\dots,i\}$ we have $\textbf{y}_j\in \pi_{\textbf{Y}}\sigma_{\textbf{X}_j=\textbf{y}_{j-1}}R^{\textbf{X}_j}\setminus( \pi_{\textbf{Y}}\sigma_{\textbf{X}_j=\textbf{y}_{j-1}}R^{\textbf{X}_j}_{\text{opt}} \cup \{\textbf{y}_1,\dots,\textbf{y}_{j-1}\})$.
\end{enumerate}
Then, there are three possibilities: Either $(\textbf{y}_1,\dots, \textbf{y}_i,\textbf{X}_1, \dots, \textbf{X}_{i+1})$ is an augmenting path.
In that case we are done.
Or there is a further misplaced element $\textbf{y}_{j+1}\in \pi_{\textbf{Y}}\sigma_{\textbf{X}_{j+1}=\textbf{y}_{j}}R^{\textbf{X}_{j+1}}\setminus( \pi_{\textbf{Y}}\sigma_{\textbf{X}_{j+1}=\textbf{y}_{j}}R^{\textbf{X}_{j+1}}_{\text{opt}} \cup \{\textbf{y}_1,\dots,\textbf{y}_{j}\})$.
In that case, simply consider $(\textbf{y}_1,\dots, \textbf{y}_{i+1},\textbf{X}_1, \dots, \textbf{X}_{i+2})$ with $\textbf{y}_{j+1}\in \pi_{\textbf{Y}}R^{\textbf{X}_{j+2}}_{\text{opt}}$.
Note that this case cannot lead to an infinite induction as the $\textbf{y}_j$ are all different.
In the last case, $\pi_{\textbf{Y}}\sigma_{\textbf{X}_{j+1}=\textbf{y}_{j}}R^{\textbf{X}_{j+1}}\setminus( \pi_{\textbf{Y}}\sigma_{\textbf{X}_{j+1}=\textbf{y}_{j}}R^{\textbf{X}_{j+1}}_{\text{opt}} \cup \{\textbf{y}_1,\dots,\textbf{y}_{j}\})$ is empty.
However, this is not possible as $(\textbf{y}_1,\dots, \textbf{y}_i,\textbf{X}_1, \dots, \textbf{X}_{i+1})$ would then be an augmenting path.
This is since reallocating along $(\textbf{y}_1,\dots, \textbf{y}_i,\textbf{X}_1, \dots, \textbf{X}_{i+1})$ can only increase the DC of $R^{\textbf{X}_{i+1}}$ due to the size of $\pi_\textbf{Y}\sigma_{\textbf{X}_{j+1}=\textbf{y}_{j}}R^{\textbf{X}_{i+1}}$ and, furthermore, there being no more misplaced element implies that after reallocation, $\pi_\textbf{Y}\sigma_{\textbf{X}_{j+1}=\textbf{y}_{j}}R^{\textbf{X}_{i+1}}$ is a subset of $\pi_{\textbf{Y}}\sigma_{\textbf{X}_{j+1}=\textbf{y}_{j}}R^{\textbf{X}_{j+1}}_{\text{opt}}$.


With regard to the time complexity, the main part that merits discussion is the computation of augmenting paths.
For this, however, simply keep pointers from every $\mathbf{y}\in \pi_\textbf{Y}R$ to $\sigma_{\textbf{X} = \textbf{y}}R^\textbf{X}$ for all $\textbf{X}\in \mathcal{X}$.
Then, when searching for an augmenting path for $\mathbf{y}_0\in \pi_\textbf{Y}R$, we simply have to do a breadth-first search, i.e., start with $\mathbf{y}_0$, then move on to all $\bigcup_{\textbf{X}\in \mathcal{X}}\sigma_{\textbf{X} = \textbf{y}_0}R^\textbf{X}$ and follow their pointers if necessary.
This only requires linear time (per $\mathbf{y}_0\in \pi_\textbf{Y}R$) as we need to consider each pointer at most once and each tuple only has a constant number of pointers.
Furthermore, the creation of the pointers can be done once in a preprocessing step.
\end{proof}

\end{document}